\tikzstyle{block} = [rectangle, draw, text width=8em, text centered, rounded corners, minimum height=3em]
\tikzstyle{line} = [draw, -latex']
\theoremstyle{definition}
\newtheorem{thm}{Theorem}
\newtheorem{lemma}[thm]{Lemma}
\newtheorem{definition}[thm]{Definition}
\newtheorem{remark}[thm]{Remark}
\begin{document}\sloppy

\newcommand{\SO}{\mathrm{SO}}
\newcommand{\GL}{\mathrm{GL}}
\newcommand{\so}{\mathfrak{so}}
\newcommand{\Liederi}{\mathcal{L}} 
\newcommand{\Liexi}{\mathcal{L}_{\xi}} 
\newcommand{\Liedualxi}{\mathcal{L}^{\ast}_{\xi}} 
\newcommand{\bbbr}{\mathbb{R}}
\newcommand{\bbbrn}{\mathbb{R}^n}
\newcommand{\bbbrrr}{\mathbb{R}^3}
\newcommand{\bbbrrrmat}{\mathbb{R}^{3\times 3}}
\newcommand{\tr}{\mathrm{tr}}
\renewcommand{\d}{\mathrm{d}}

\newcommand{\cay}{\mathrm{Cay}}
\newcommand{\const}{\mathrm{const}}
\newcommand{\Fe}{\mathcal{F}_\mathrm{E}}
\newcommand{\Fb}{\mathcal{F}_\mathrm{V}}
\newcommand{\mdef}{:=}
\newcommand{\ex}{\mathrm{e}_x}
\newcommand{\ez}{\mathrm{e}_z}
\newcommand{\ad}{\mathrm{ad}}
\newcommand{\Ad}{\mathrm{Ad}}
\newcommand{\diag}{\mathrm{diag}}
\newcommand{\andrm}{~ \mathrm{and} ~}
\newcommand{\grad}{\mathrm{grad}}
\newcommand{\Exp}{\mathrm{Exp}}
\newcommand{\lpartial}{\bar\partial}
\newcommand{\En}{\mathcal{E}}
\newcommand{\g}{\mathfrak{g}}
\newcommand{\gad}{\mathfrak{g}^*}

\newcommand{\inverse}{^{-1}}
\newcommand{\trans}{^{\top}}
\newcommand{\skewmap}{^{\times}}
\newcommand{\unskewmap}{^{\vee}}
\newcommand{\abs}[1]{\left|#1\right|}
\newcommand{\norm}[1]{\left\lVert#1\right\rVert}
\newcommand{\normtwo}[1]{\left\lVert#1\right\rVert_2}
\newcommand{\normf}[1]{\left\lVert#1\right\rVert_F}
\newcommand{\normx}[1]{\left\lVert#1\right\rVert_x}
\newcommand{\normxk}[1]{\left\lVert#1\right\rVert_{x_k}}
\newcommand{\inpro}[1]{\left\langle#1\right\rangle}

\newcommand{\mathcolorbox}[2]{\colorbox{#1}{$\displaystyle #2$}}

\title{Euler--Poincar\'e reduction and the Kelvin--Noether theorem for discrete  mechanical systems with advected parameters and additional dynamics}

\author{Yusuke Ono$^{1}$\footnote{Email: yuu555yuu@keio.jp}, ~ Simone Fiori$^{2}$\footnote{Email: s.fiori@staff.univpm.it} ~ and Linyu Peng{$^3$}\footnote{Corresponding author. Email: l.peng@mech.keio.ac.jp}\vspace{0.4cm}\\
{\it 1. Graduate School of Science and Technology, Keio University,} \\
{\it Yokohama 223-8522, Japan}\\
{\it 2. Department of Information Engineering, Marche Polytechnic University,} \\
{\it Ancona I-60131, Italy}\\
{\it 3. Department of Mechanical Engineering, Keio University,} \\
{\it Yokohama 223-8522, Japan}\\
 }

\maketitle
\begin{abstract}
    The Euler--Poincar\'e equations, firstly introduced by Henri Poincar\'e in 1901, arise from the application of Lagrangian mechanics to systems on Lie groups that exhibit symmetries, particularly in the contexts of classical mechanics and fluid dynamics. 
    These equations have been extended to various settings, such as semidirect products, advected parameters, and field theory, and have been widely applied to mechanics and physics.
    In this paper, we introduce the discrete Euler--Poincar\'e reduction for discrete Lagrangian systems on Lie groups with advected parameters and additional dynamics, utilizing  group difference maps. Specifically, the group difference map is defined using either the Cayley transform or the matrix exponential.
    The continuous and discrete Kelvin--Noether theorems are extended accordingly, that account for Kelvin--Noether quantities of the corresponding continuous and discrete Euler--Poincar\'e equations. 
    As an application, we show both continuous and discrete Euler--Poincar\'e formulations about the dynamics of underwater vehicles, followed by numerical simulations.
   Numerical results illustrate the scheme's ability to preserve geometric properties over extended time intervals, highlighting its potential for practical applications in the control and navigation of underwater vehicles, as well as in other domains.
\end{abstract}

\section{Introduction}
Many physical systems exhibit symmetries, such as invariance under translation, rotation, or particle relabeling.
Symmetries can be used to simplify the dynamics by reducing the degrees of freedom or the order of an invariant system (e.g., \cite{marsden2013introduction,olver1993}).
In Lagrangian mechanics,  the symmetries of a Lagrangian, specifically its invariance under Lie group actions, enable the reduction of the corresponding Euler–Lagrange equations (i.e., the equations of motion), for example, through the use of invariantized coordinates. More generally, this can be extended to symmetries of functionals, also known as least actions.
One of the reduction methods, called Euler--Poincar\'e reduction, leads to the Euler--Poincar\'e equations, is a particular case of Lagrangian reduction when the configuration space is a Lie group $G$ and the Lagrangian is $G$-invariant \cite{marsden1993lagrangian, marsden1993reduced}.
It has been widely applied in rigid body dynamics \cite{leok2007overview}, fluid mechanics \cite{Arnold1966a}, 
magnetohydrodynamics \cite{Holm129518}, 
plasma physics \cite{Holm174831, squire2013hamiltonian}, 
thermodynamical simple systems \cite{coueraud2018variational}, etc. 
The Euler--Poincar\'e equations can also be expressed in terms of  the Lie--Poisson equations via the reduced Legendre transformation. 

Reduction theory in discrete mechanics is also often approached from a variational perspective, as demonstrated in \cite{veselov1991discrete, veselov1988integrable}.
These variational integrators are known to play a crucial role in the numerical integration of mechanical systems and geometric mechanics \cite{kane2000variational, Wendlandt1997223}.
Some pioneering work on discretizations of reduction theory includes \cite{bobenko1999discrete,GeMarsden1988134}. 
In particular, the discrete analogue of the Euler--Poincar\'e reduction has been studied in \cite{gawlik2011geometric,marsden1999discrete, pavlov2011structure}. Implementing numerical iterations of the discrete Euler--Poincaré equations in the (linear) Lie algebra rather than directly using the Lie group-valued Euler--Lagrange equations involves a discretization strategy that focuses on the algebraic structure, simplifying the computations by working with the Lie algebra rather than the more complex Lie group.

The Euler--Poincar\'e reduction has been investigated in various contexts (e.g., \cite{cendra_holm_marsden_ratiu_1998, marsden2000reduction}). 
Recently, in \cite{HOLM2023133847}, the authors proposed (continuous) Euler--Poincar\'e reduction with advected parameters and additional dynamics, which was applied to study the dynamics of the wave mean flow interaction.
In this paper, we develop the discrete analogue of the Euler--Poincar\'e reduction incorporated advected parameters and additional dynamics and then propose the corresponding discrete Euler--Poincar\'e equations with additional dynamics.

The contributions of this study are summarized below.
\begin{itemize}
    \item[(1)] The discrete Euler--Poincar\'e reduction for Lagrangian mechanics with advected parameters and additional dynamics is formulated.
    \item[(2)] The Kelvin--Noether theorem and its discrete analogue are generalized to systems with additional dynamics and advected parameters.
    \item[(3)] The discrete Euler--Poincar\'e reduction is applied to the dynamics of underwater vehicles by deriving both the continuous and discrete Euler--Poincar\'e equations.
    The behavior of the total energy and Kelvin--Noether quantity is also presented to demonstrate the efficiency of the scheme.
\end{itemize}

This paper is structured as follows. 
In Section \ref{sec: EPreduction}, we recall the Euler--Poincar\'e reduction for systems with and without advected parameters and additional dynamics and extend the Kelvin--Noether theorem to incorporate advected parameters and additional dynamics simultaneously.
In Section \ref{sec: DisEPreduction}, the discrete counterpart of the theories introduced in Section \ref{sec: EPreduction} is developed by using the discrete variational calculus. 
In Section \ref{sec: Application2UV}, the discrete Euler--Poincar\'e reduction with advected parameters and additional dynamics is applied to the dynamics of underwater vehicles. 
We derive the numerical schemes and the discrete Kelvin--Noether quantities in cases that the group difference map is chosen as either the Cayley transform or the matrix exponential. 
In Section \ref{sec: NumericalSimulation}, we present  numerical results for the dynamics of underwater vehicles, focusing on the behavior of the total energy and the Kelvin--Noether quantity.
Finally, we conclude and discuss directions of future research in Section \ref{sec: Conclusion}.

\section{The continuous Euler--Poincar\'e reduction} \label{sec: EPreduction}
Consider a mechanical system with a configuration space $Q$, an $n$-dimensional smooth manifold, 
and let $G$ be a Lie group acting smoothly on $Q$.
In the current study, left group actions will be considered unless otherwise specified. 
The action of a group element $g \in G$ on a point $q \in Q$ is denoted as $g q$, while its tangent lift to $v_q \in T_q Q$ is denoted by $g v_q$. 
Consider a smooth curve $q(t)\in Q $ for $t \in \left[0,T\right]$ and its tangent vector $\dot{q}(t)$.
In the following, the time variable $t$ is omitted for brevity.
A Lagrangian defined in the tangent bundle $TQ$, $L: TQ \to \bbbr$, is  $G$-invariant if
\begin{equation}
    L \left( g q, g \dot q\right) = L \left( q, \dot q\right), \ \forall g \in G, q \in Q, \andrm \ \dot q \in T_{q} Q.
\end{equation}
If $Q = G$ and choosing left action of $g\inverse$, the (left) reduced Lagrangian $\ell: \g \to \bbbr$ defined on the Lie algebra  can be derived as
\begin{equation}
    \ell(\xi) := L \left( g^{-1} g, g^{-1} \dot g\right) = L \left( e, \xi\right),
\end{equation}
where $\xi=g^{-1}\dot{g}\in \g$ and $e\in G$ is the identity element.
Associated with it, the `basic' Euler--Poincar\'e equations hold (e.g., \cite{bloch1996euler, marsden1993lagrangian, marsden1993reduced,Poincare1901})
\begin{equation} \label{eq: basicEP}
    \frac{d}{dt} \frac{\partial \ell}{\partial \xi} = \ad^*_{\xi} \frac{\partial \ell}{\partial \xi},
\end{equation}
where the adjoint operator is defined by using the Lie bracket as 
\begin{equation}
\begin{aligned}
     \ad:\g\times \g &\to \g\\
     (\xi,\eta)&\mapsto \ad_{\xi}\eta:=[\xi,\eta],
\end{aligned}
\end{equation} and the coadjoint operator $\ad^*: \g \times \gad \to \gad$ for left action is defined as the pairing between $\g$ and its dual $\gad$,
\begin{equation}
    \inpro{\ad^*_{\xi} \mu,\eta} := \inpro{\mu, \ad_{\xi} \eta},\ \mathrm{for} \ \mu \in \gad, \ \xi, \eta \in \g.
\end{equation}
Importance of the Euler--Poincar\'e equations \eqref{eq: basicEP}
(e.g., its relation to nonholonomic constraints and its Hamiltonian character)
was first recognized by Hamel \cite{hamel1904, hamel1949} and Chetayev \cite{chetayev1941}.

Analogous to the equivalence of the Euler--Lagrange equations for a regular Lagrangian defined om the tangent bundle $TQ$  and Hamilton's equations on the cotangent bundle $T^*Q$ through the Legendre transformation, 
the Euler--Poincar\'e equations \eqref{eq: basicEP} are similarly equivalent to the Lie--Poisson equations
\begin{equation} \label{eq: LPeqplus}
    \dot{\mu} = \ad^*_{\frac{\partial h}{\partial \mu}} \mu,
\end{equation}
where the Legendre transformation between $\g$ and $\g^*$ defines
\begin{equation}
    \mu := \frac{\partial \ell}{\partial \xi},
\end{equation}
and $h$ is the reduced Hamiltonian defined on $\g^*$, 
\begin{equation}
h(\mu) := \inpro{\mu, \xi} - \ell(\xi).
\end{equation}
It follows immediately that 
\begin{equation}
\frac{\partial h}{\partial \mu} = \xi + \inpro{\mu, \frac{\partial \xi}{\partial \mu}} - \inpro{\frac{\partial \ell}{\partial \xi}, \frac{\partial \xi}{\partial \mu}} = \xi.
\end{equation}


\subsection{The Euler--Poincar\'e equations with advected parameters}
Holm et al. studied mechanical systems with advected parameters in a dual vector space $V^*$ within the Lagrangian semidirect product theory \cite{HOLM19981}.
However, it was also pointed out therein that the Euler--Poincar\'e equations with advected parameters are not the Euler--Poincar\'e equations derived from the semidirect product Lie algebra. 
 Advected parameters often appear, for instance, when a potential function in a Lie group $G$ is involved, 
and they bear dynamical significance in the reduction context.
For instance, in the case of the heavy top, the parameter corresponds to the unit vector aligned with gravity. 
In the context of compressible flows, it represents the fluid density in the reference configuration.
The following theorem for the Euler--Poincar\'e equations with advected parameters was shown in \cite{HOLM19981}.
\begin{thm}\label{thm: basic_EP}
    Let $G$ be a Lie group that acts on a dual vector space $V^*$ from the left. 
    Assume that the Lagrangian $L:TG \times V^* \to \mathbb{R}$ is left $G$-invariant. 
    When $a_0 \in V^*$, the corresponding function $L_{a_0}: TG \to \mathbb{R}$, 
    defined by $L_{a_0}(v_g) := L(v_g, a_0)$, is left $G_{a_0}$-invariant, 
    where $G_{a_0} \subseteq G$ denotes the isotropy group of $a_0$. 
    The $G$-invariance of $L$ allows one to define $\ell: \mathfrak{g} \times V^* \to \mathbb{R}$ by
    \begin{equation}
        \ell\left(g^{-1} v_g, g^{-1} a_0\right) = L(v_g, a_0).
    \end{equation}
    For a curve $g(t)$ on $G$, let $\xi(t) = g(t)^{-1} \dot{g}(t) \in \g$ and define a curve for the advected parameters $a(t) = g(t)^{-1} a_0\in V^*$. 
  Their variations satisfy
  \begin{equation}
      \delta \xi = \dot{\eta} +\ad_{\xi}\eta,\quad \delta a = - \eta a = - \mathcal{L}_{\eta} a,
  \end{equation}    
  where $\eta =g^{-1}\delta {g}$ and $\Liexi$ is the Lie derivative with respect to $\xi$. 
  Therefore, Hamilton's principle for the action $\int \ell dt$ leads to the (left-left\footnote{In the current study, left group actions of $G$ on $V^*$ are considered and the Lagrangians are assumed to be left invariant too.}) 
    Euler--Poincar\'e equations  on $\mathfrak{g} \times V^*$: 
    \begin{equation} \label{eq: conEPw/ad}
        \begin{aligned}
            \frac{d}{dt} \frac{\partial \ell}{\partial \xi}  = \ad^*_{\xi} \frac{\partial \ell}{\partial \xi} + \frac{\partial \ell}{\partial a} \diamond a,
        \end{aligned}
    \end{equation}
    and the curve $a(t)=g(t)^{-1}a_0$ is the unique solution of the linear differential equation 
       \begin{equation} 
        \begin{aligned}
            \frac{d}{dt} a = - \mathcal{L}_{\xi} a
        \end{aligned}
    \end{equation}
        with initial condition $a(0)=a_0$, 
     and the diamond operator $\diamond: V \times V^* \to \gad$ is the bilinear operator defined by
    \begin{equation}
        \left\langle b, \Liederi_{\xi} a \right\rangle _{V\times V^*} 
        := - \left\langle b \diamond a, \xi \right\rangle _{\gad \times \g},
    \end{equation}
    for all $b \in V$, $a \in V^*$, and $\xi \in \g$.
\end{thm}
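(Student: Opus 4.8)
The plan is to obtain both equations from Hamilton's variational principle applied to the reduced action $\int_0^T \ell(\xi,a)\,dt$, taking the two constrained variation formulas as the point of departure. First I would fix a one-parameter family of curves $g_\epsilon(t)$ with fixed endpoints and set $\eta = g\inverse \delta g$, where $\delta g = \partial_\epsilon g_\epsilon|_{\epsilon=0}$. The identity $\delta \xi = \dot\eta + \ad_\xi \eta$ then follows from the equality of the mixed partials $\partial_t \partial_\epsilon g_\epsilon = \partial_\epsilon \partial_t g_\epsilon$, together with the product rule and the definition $\ad_\xi \eta = [\xi,\eta]$; this is the standard Euler--Poincar\'e computation. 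For the advected parameter, since $a = g\inverse a_0$ with $a_0$ held fixed, differentiating in $\epsilon$ gives directly $\delta a = -(g\inverse \delta g)(g\inverse a_0) = -\eta a = -\Liederi_\eta a$, while the identical computation in $t$ yields the advection law $\dot a = -\Liexi a$ with $a(0) = a_0$. The uniqueness of $a(t)$ is then immediate from the existence--uniqueness theorem for linear ODEs once the curve $\xi(t)$ is prescribed.

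Next I would expand $\delta \int_0^T \ell\,dt = \int_0^T \left( \inpro{\frac{\partial \ell}{\partial \xi}, \delta \xi} + \inpro{\frac{\partial \ell}{\partial a}, \delta a} \right) dt$ and substitute the two variation formulas, handling the three resulting terms separately. The term carrying $\dot\eta$ is integrated by parts, the boundary contributions vanishing because $\eta(0) = \eta(T) = 0$, and produces $-\inpro{\frac{d}{dt}\frac{\partial \ell}{\partial \xi}, \eta}$. The term carrying $\ad_\xi \eta$ is rewritten through the definition of the coadjoint operator as $\inpro{\ad^*_\xi \frac{\partial \ell}{\partial \xi}, \eta}$. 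Finally the advected-parameter term equals $-\inpro{\frac{\partial \ell}{\partial a}, \Liederi_\eta a}$, which the defining relation of the diamond operator (with $b = \frac{\partial \ell}{\partial a} \in V$ and $\xi$ replaced by $\eta$) converts into $+\inpro{\frac{\partial \ell}{\partial a} \diamond a, \eta}$.

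Collecting the three contributions, the variation of the action takes the form $\int_0^T \inpro{ -\frac{d}{dt}\frac{\partial \ell}{\partial \xi} + \ad^*_\xi \frac{\partial \ell}{\partial \xi} + \frac{\partial \ell}{\partial a} \diamond a, \ \eta}\,dt$. Since $\eta$ is an arbitrary curve in $\g$ vanishing at the endpoints, the fundamental lemma of the calculus of variations forces the bracketed coefficient to vanish identically, which is precisely \eqref{eq: conEPw/ad}.

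The step I expect to be the main obstacle is the careful derivation of $\delta \xi = \dot\eta + \ad_\xi \eta$ and the correct tracking of signs in the diamond term. In particular one must confirm that $\frac{\partial \ell}{\partial a}$ pairs naturally with elements of $V^*$, so that it may be identified with an element of $V$ as the diamond operator requires, and verify that the sign conventions built into the definitions of $\ad^*$ and $\diamond$ combine to reproduce the stated right-hand side rather than its negative. Everything else reduces to routine integration by parts and the fundamental lemma.
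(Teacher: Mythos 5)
Your proposal is correct and follows exactly the route the paper intends: the paper states this theorem citing Holm--Marsden--Ratiu without writing out a proof, but the theorem statement itself supplies the constrained variations $\delta\xi=\dot\eta+\ad_\xi\eta$ and $\delta a=-\mathcal{L}_\eta a$ precisely so that Hamilton's principle, integration by parts, and the definitions of $\ad^*$ and $\diamond$ yield \eqref{eq: conEPw/ad} as you describe. Your sign bookkeeping in the diamond term is consistent with the paper's convention $\left\langle b,\Liederi_\eta a\right\rangle=-\left\langle b\diamond a,\eta\right\rangle$, and the identification $\frac{\partial \ell}{\partial a}\in V^{**}\cong V$ is unproblematic in the finite-dimensional setting the paper assumes.
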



\subsection{The Euler--Poincar\'e equations with advected parameters and additional dynamics}\label{sec: EP_w_additional}
In this section, we recall the Euler--Poincar\'e equations with advected parameters and additional dynamics introduced in \cite{HOLM2023133847}.
Consider $G \times Q$ as the configuration space and the curves $(g(t),q(t))$ in $ G\times  Q$ for $t \in \left[0,T\right]$ with tangent vectors $\dot g(t) \in T_{g}G$ and $\dot q(t) \in T_{q}Q$.
A Lagrangian defined in $T(G\times Q)$ with advected parameters $a_0\in V^*$ can be defined as
\begin{equation}
\begin{aligned}
    L: T\left(G\times Q\right) \times V^* &\to \mathbb{R}\\
      \left(g, \dot{g}, q, \dot{q}, a_0 \right) &\mapsto L\left(g, \dot{g}, q, \dot{q}, a_0 \right). 
\end{aligned}
\end{equation}
Invariance of the Lagrangian under the left action of $G$ reads
\begin{equation}
    L \left(g, \dot{g}, q, \dot{q}, a_0 \right) = L \left(\tilde{g} g, \tilde{g} \dot{g}, \tilde{g} q, \tilde{g} \dot{q}, \tilde{g} a_0 \right), \ \forall \tilde{g} \in G.
\end{equation} 
Assuming that $\tilde{g} = g^{-1}$, the reduced Lagrangian can be defined as the function $\ell: \mathfrak{g} \times TQ \times V^* \to \mathbb{R}$,
\begin{equation}
    \ell \left( \xi,n, \nu,  a \right) := L \left(g^{-1} g, g^{-1} \dot{g}, g^{-1} q, g^{-1} \dot{q}, g^{-1} a_0 \right),
\end{equation}
where $\xi := g^{-1} \dot{g} \in \g$, $n := g^{-1} q\in Q$,  $\nu := g^{-1} \dot{q} \in T_nQ$, and $a: =g^{-1}a_0\in V^*$. 
%
Defining $\eta := g^{-1} \delta g\in \g$ and $w := g^{-1} \delta q \in T_nQ$, 
the variations of $\xi,n, \nu$, and $a$ can be calculated as follows:
\begin{equation}\label{eq:variadd}
    \begin{aligned}
            \delta \xi   = \dot{\eta} + \ad_{\xi} \eta, \quad \delta n   =  w - \eta n, \quad 
        \delta \nu   = \dot{w} + \xi w - \eta \nu = \dot{w} + \mathcal{L}_{\xi} w - \mathcal{L}_{\eta} \nu, \quad
        \delta a    = - \mathcal{L}_{\eta} a ,
    \end{aligned} 
\end{equation}
where we have used the definitions $\mathcal{L}_\xi w :=\xi w$ and $\mathcal{L}_{\eta} a:= \eta a$.
Note that $\eta n\in T_n Q$  is defined by the infinitesimal generator at $n$, $\eta n := \eta_Q(n)$, or the Lie algebra action of  the manifold $Q$, 
as \cite{cendra2001lagrangian}
\begin{equation}
    \left(\eta , n\right) \in \g \times Q \mapsto \eta_Q(n) := \left.\frac{d}{ds}\right|_{s=0} \exp\left(s \eta \right) n.
\end{equation}
The variations $\eta$ and $w$ are arbitrary and vanish at fixed endpoints of the time interval $[0, T]$.
From the above variations and Hamilton's principle, the Euler--Poincar\'e equations with the advected parameter in $V^*$ and the additional dynamics in $Q$ can be derived as follows:
\begin{equation}\label{eq: conEPwithAD}
    \begin{aligned}
        \frac{d}{dt} \frac{\partial \ell}{\partial \xi}
            & = \ad^*_\xi \frac{\partial \ell}{\partial \xi}
            + \frac{\partial \ell}{\partial a} \diamond a
            + \frac{\partial \ell}{\partial n} \diamond n
            + \frac{\partial \ell}{\partial \nu} \diamond \nu, \\
        \frac{d}{dt} \frac{\partial \ell}{\partial \nu}
            & = \frac{\partial \ell}{\partial n}
            + \mathcal{L}^\ast_{\xi} \frac{\partial \ell}{\partial \nu},
    \end{aligned}
\end{equation}
and the curve $a(t) = g(t)^{-1}a_0$ is again the unique solution of 
\begin{equation}\label{eq:conta}
    \begin{aligned}
 \dfrac{d a}{dt} = - \Liexi a, \ a(0)=a_0.
    \end{aligned}
\end{equation}
The complementary relation of $\xi,n,\nu$ reads
\begin{equation}
    \begin{aligned}
\dfrac{d n}{d t} = \nu - \xi n.
    \end{aligned}
\end{equation}
Here, the operator $\mathcal{L}^\ast$ 
is defined by the pairing 
\begin{equation}\label{eq: dual_Lie_op}
    \left\langle \dfrac{\partial \ell}{\partial \nu}, \Liexi w \right\rangle :=
    \left\langle \Liedualxi \dfrac{\partial \ell}{\partial \nu}, w \right\rangle, \quad \text{for } \xi\in\g, w\in T_nQ,
\end{equation}
and representation of the diamond operator in $\frac{\partial \ell}{\partial n} \diamond n
            + \frac{\partial \ell}{\partial \nu} \diamond \nu$ 
            follows from  that used in \cite{HOLM2023133847}, which is (local) decomposition of the operator   $\diamond: T^*TQ\times TQ \to \gad$ (corresponding to the Lie group $G$ acting on the manifold $Q$) defined as
            \begin{equation}\label{eq: concotdia}
   \left\langle \left(\frac{\partial \ell}{\partial n},\frac{\partial \ell}{\partial \nu}\right), \eta (n, \nu)\right\rangle_{T^*_{(n,\nu)}TQ\times T_{(n,\nu)}TQ} := -\left\langle \left(\frac{\partial \ell}{\partial n},\frac{\partial \ell}{\partial \nu}\right)\diamond (n,\nu), \eta\right\rangle_{\gad \times \g},\quad \text{for } \eta \in \g.
            \end{equation}

\subsection{The continuous Kelvin--Noether theorem}
The Kelvin--Noether theorem combines ideas from Kelvin's circulation theorem and Noether's theorem. It relates the conservation of circulation in a fluid to the underlying symmetries of the fluid's dynamics, particularly those arising from its Lagrangian formulation and the associated Lie-group symmetries \cite{gawlik2011geometric,HOLM19981}.

The Noether's theorem states that every finite-dimensional variational Lie group symmetry gives rise to a conservation law
of the Euler--Lagrange equations (e.g., \cite{noether1918,olver1993}).
Discrete and semi-discrete extensions of this theorem have also been proposed (e.g., \cite{marsden2001discrete, peng2017symmetries,peng2022dd}).
In particular, for Euler--Poincar\'e equations, this extends the Kelvin's circulation theorem for continuum mechanics and is called the Kelvin--Noether theorem (e.g., \cite{cotter2013noether,gawlik2011geometric,HOLM19981}). 
Next we introduce the Kelvin--Noether theorem for Euler--Poincar\'e equations with advected parameters and additional dynamics.
Note that in the current study,  the Lie group $G$ is assumed to be finite-dimensional, and hence $\g^{**} \cong \g$.
\begin{thm}\label{thm: conKNthm}
    Let $G$ be a group acting on a manifold $\mathcal{C}$ and a vector space $V$ both from the left.  
    Consider an equivalent map $\mathcal{K}: \mathcal{C} \times V^* \to \g^{**} \cong \g$, that is,
    \begin{equation}
        \mathcal{K}(gc, ga) = \Ad_{g} \mathcal{K}(c, a) 
    \end{equation}
    for any $g \in G$, $c \in \mathcal{C}$ and $a \in V^*$.
    The adjoint action $\Ad: G \times \g \to \g$ is defined by the tangent map of the left and right multiplications $L_g$ and $R_g$ (for $g\in G$) at the identity $e$ as
    \begin{equation}\label{eq: ad_action}
        \Ad_g := T_e \left(L_g \circ R_{g\inverse} \right).
    \end{equation}
    Furthermore, consider a path $( \xi(t), n(t), \nu(t), a(t))\in \g\times TQ \times V^*$ satisfying the  Euler--Poincar\'e equations \eqref{eq: conEPwithAD} and the advected parameter  dynamics \eqref{eq:conta}. 
    For $c(t) = g(t)^{-1} c_0$ with $c_0 \in \mathcal{C}$ and $a(t) = g(t)^{-1} a_0$, define the (left-left) Kelvin--Noether quantity $\mathcal{I}: \mathcal{C} \times \g \times TQ \times V^* \to \bbbr$ by
    \begin{equation}\label{eq: KN_quantity}
        \mathcal{I}(c, \xi, n, \nu, a) = \left\langle \mathcal{K}(c, a), \frac{\partial \ell}{\partial \xi} - \frac{\partial \ell}{\partial \nu} \diamond n \right\rangle,
    \end{equation}
    which satisfies that
    \begin{equation}
        \frac{d}{dt} \mathcal{I}(t) = \left\langle \mathcal{K}(c(t), a(t)), \frac{\partial \ell}{\partial a} \diamond a \right\rangle.
    \end{equation}
\end{thm}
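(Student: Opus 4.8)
The plan is to follow the standard three-step pattern for Kelvin--Noether type results: strip the explicit time dependence out of $\mathcal{K}$ using equivariance, differentiate through a coadjoint transport, and then invoke the Euler--Poincar\'e equations to isolate the advected-parameter forcing.

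First, I would use the equivariance of $\mathcal{K}$ together with $c(t)=g(t)\inverse c_0$ and $a(t)=g(t)\inverse a_0$. Applying $\mathcal{K}(gc,ga)=\Ad_g\mathcal{K}(c,a)$ with $g=g(t)$ gives $\mathcal{K}(c_0,a_0)=\Ad_{g(t)}\mathcal{K}(c(t),a(t))$, so that $\mathcal{K}(c(t),a(t))=\Ad_{g(t)\inverse}\kappa_0$ with $\kappa_0\mdef\mathcal{K}(c_0,a_0)$ \emph{independent} of $t$. Abbreviating the combined momentum by $m\mdef\frac{\partial\ell}{\partial\xi}-\frac{\partial\ell}{\partial\nu}\diamond n\in\gad$, the Kelvin--Noether quantity becomes $\mathcal{I}(t)=\inpro{\Ad_{g\inverse}\kappa_0,m}=\inpro{\kappa_0,\Ad^*_{g\inverse}m}$.

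Second, since $\kappa_0$ is constant I would differentiate only the transported momentum. With $\xi=g\inverse\dot g$, the elementary relation $\frac{d}{dt}(\Ad_{g\inverse}\eta)=-\ad_\xi\Ad_{g\inverse}\eta$ (for fixed $\eta\in\g$) together with the duality defining $\ad^*$ and $\Ad^*$ yields the coadjoint transport formula $\frac{d}{dt}(\Ad^*_{g\inverse}m)=\Ad^*_{g\inverse}(\dot m-\ad^*_\xi m)$. Hence
\[
\frac{d}{dt}\mathcal{I}=\inpro{\kappa_0,\Ad^*_{g\inverse}(\dot m-\ad^*_\xi m)}=\inpro{\Ad_{g\inverse}\kappa_0,\dot m-\ad^*_\xi m}=\inpro{\mathcal{K}(c,a),\dot m-\ad^*_\xi m}.
\]
The theorem is thereby reduced to the single pointwise identity $\dot m-\ad^*_\xi m=\frac{\partial\ell}{\partial a}\diamond a$, i.e. to showing that the combined momentum $m$ obeys a pure advection equation whose only source is the advected parameter.

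Third --- and this is where the real work lies --- I would derive that identity from the equations of motion. Substituting the first line of \eqref{eq: conEPwithAD} cancels $\frac{d}{dt}\frac{\partial\ell}{\partial\xi}$ against $\ad^*_\xi\frac{\partial\ell}{\partial\xi}$ and leaves
\[
\dot m-\ad^*_\xi m=\frac{\partial\ell}{\partial a}\diamond a+\frac{\partial\ell}{\partial n}\diamond n+\frac{\partial\ell}{\partial\nu}\diamond\nu-\Big(\frac{d}{dt}-\ad^*_\xi\Big)\Big(\frac{\partial\ell}{\partial\nu}\diamond n\Big),
\]
so everything hinges on the cancellation of the last three groups of terms. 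I would differentiate $\frac{\partial\ell}{\partial\nu}\diamond n$ consistently with the tangent-lifted action encoded in \eqref{eq: concotdia}, then insert the second Euler--Poincar\'e equation $\frac{d}{dt}\frac{\partial\ell}{\partial\nu}=\frac{\partial\ell}{\partial n}+\Liedualxi\frac{\partial\ell}{\partial\nu}$ and the complementary relation $\dot n=\nu-\xi n$; the $\frac{\partial\ell}{\partial n}\diamond n$ and $\frac{\partial\ell}{\partial\nu}\diamond\nu$ contributions are then designed to drop out. The expected main obstacle is precisely the manifold bookkeeping: unlike the vector-space case $V^*$, the diamond pairing on $Q$ is not naively bilinear along the motion, so the differentiation produces an additional derivative-of-generator term, and controlling it requires the infinitesimal equivariance of the diamond map. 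After pairing with an arbitrary $\zeta\in\g$ this amounts to the compatibility identity
\[
\ad^*_\xi\Big(\tfrac{\partial\ell}{\partial\nu}\diamond n\Big)=\Big(\Liedualxi\tfrac{\partial\ell}{\partial\nu}\Big)\diamond n-\tfrac{\partial\ell}{\partial\nu}\diamond(\Liexi n),
\]
which I would verify by rewriting each diamond through its defining relation and \eqref{eq: dual_Lie_op}, reducing it to the commutation relation (the Lie algebra (anti-)homomorphism property for a left action) of the infinitesimal generators $\eta\mapsto\eta_Q$. Once this cancellation is established, the residual is exactly $\frac{\partial\ell}{\partial a}\diamond a$, and substitution into the expression for $\frac{d}{dt}\mathcal{I}$ completes the proof.
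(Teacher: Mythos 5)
Your proposal is correct and follows essentially the same route as the paper's proof: both use equivariance of $\mathcal{K}$ to pull the time dependence into a coadjoint transport $\Ad^*_{g\inverse}$, differentiate via $\frac{d}{dt}\Ad^*_{g\inverse}\mu = \Ad^*_{g\inverse}(\dot\mu - \ad^*_\xi\mu)$, and then cancel the $\frac{\partial\ell}{\partial n}\diamond n$ and $\frac{\partial\ell}{\partial\nu}\diamond\nu$ terms using both Euler--Poincar\'e equations together with $\dot n = \nu - \xi n$. The only difference is organizational: you isolate the compatibility identity $\ad^*_\xi\bigl(\frac{\partial\ell}{\partial\nu}\diamond n\bigr) = \bigl(\Liedualxi\frac{\partial\ell}{\partial\nu}\bigr)\diamond n - \frac{\partial\ell}{\partial\nu}\diamond(\Liexi n)$ as a separate lemma, whereas the paper carries out the equivalent manipulation inline by pairing against a fixed $\mu\in\g$ and unpacking the diamond through its definition.
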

\begin{proof}
    The proof is similar to that in \cite{HOLM19981, stoica}, with an extension to account for the additional dynamics. 
    Recall that $g(t)$, $\xi(t)= g\inverse(t)\dot{g}(t)$, $n(t)=g\inverse(t)q(t)$ and $\nu(t)=g\inverse(t)\dot{q}(t)$ are time-dependent.  
    For any $\mu \in \g$,  we have
    \begin{equation}
        \begin{aligned}
            \left\langle \frac{d}{dt} \Ad^*_{g\inverse} \left( \frac{\partial \ell}{\partial \nu} \diamond n \right), \mu\right\rangle 
            & = \frac{d}{dt} \left\langle - \frac{\partial \ell}{\partial \nu}, \left( \Ad_{g\inverse} \mu \right)n\right\rangle \\
            & = \left\langle - \frac{d}{dt} \frac{\partial \ell}{\partial \nu}, \left( \Ad_{g\inverse} \mu \right)n\right\rangle + \left\langle - \frac{\partial \ell}{\partial \nu}, \frac{d}{dt} \left(\Ad_{g\inverse} \mu \right) n \right\rangle \\
            & = \left\langle \Ad^*_{g\inverse}\left( \frac{d}{dt} \frac{\partial \ell}{\partial \nu} \diamond n \right), \mu \right\rangle + \left\langle - \frac{\partial \ell}{\partial \nu}, - \Liexi \left( \Ad_{g\inverse} \mu \right) n +  \Liederi_{\Ad_{g\inverse} \mu} \nu \right\rangle \\
            & = \left\langle \Ad^*_{g\inverse}\left( \frac{d}{dt} \frac{\partial \ell}{\partial \nu} \diamond n \right), \mu \right\rangle + \left\langle \Ad^*_{g\inverse} \left( -\Liedualxi  \frac{\partial \ell}{\partial \nu} \diamond n + \frac{\partial \ell}{\partial \nu} \diamond \nu\right), \mu \right\rangle \\
            & = \left\langle \Ad^*_{g\inverse}\left( \frac{d}{dt} \frac{\partial \ell}{\partial \nu} \diamond n - \Liedualxi  \frac{\partial \ell}{\partial \nu} \diamond n + \frac{\partial \ell}{\partial \nu} \diamond \nu \right), \mu \right\rangle. \\
        \end{aligned}
    \end{equation}
    Taking the derivative of the Kelvin--Noether quantity \eqref{eq: KN_quantity} and taking the Euler--Poincar\'e equations \eqref{eq: conEPwithAD} into consideration, we obtain
    \begin{equation}
        \begin{aligned}
            \frac{d}{dt} \mathcal{I}(t) 
            & = \frac{d}{dt} \left\langle \mathcal{K}(c_0, a_0), \Ad^*_{g\inverse} \left( \frac{\partial \ell}{\partial \xi} - \frac{\partial \ell}{\partial \nu} \diamond n \right) \right\rangle \\
            & = \left\langle \mathcal{K}(c(t), a(t)), \left( \frac{d}{dt} \frac{\partial \ell}{\partial \xi} - \ad^*_{\xi} \frac{\partial \ell}{\partial \xi} - \frac{d}{dt} \frac{\partial \ell}{\partial \nu} \diamond n + \Liedualxi  \frac{\partial \ell}{\partial \nu} \diamond n - \frac{\partial \ell}{\partial \nu} \diamond \nu \right) \right\rangle \\
            & = \left\langle \mathcal{K}(c, a), \frac{\partial \ell}{\partial a} \diamond a \right\rangle,
        \end{aligned}
    \end{equation}
    which completes the proof.
    Note that we have used the formula for differentiating the coadjoint action \cite{stoica,marsden2013introduction}, namely,
    \begin{equation}
        \begin{aligned}
            \frac{d}{dt} \left\langle \Ad^*_{g\inverse} \frac{\partial \ell}{\partial \xi}, \mu\right\rangle 
            & = \frac{d}{dt} \left\langle \frac{\partial \ell}{\partial \xi}, \Ad_{g\inverse} \mu\right\rangle \\
            & = \left\langle \frac{d}{dt} \frac{\partial \ell}{\partial \xi}, \Ad_{g\inverse} \mu\right\rangle + \left\langle \frac{\partial \ell}{\partial \xi},- \ad_\xi \left( \Ad_{g\inverse} \mu \right)\right\rangle \\
            & = \left\langle  \Ad^*_{g\inverse} \left( \frac{d}{dt} \frac{\partial \ell}{\partial \xi} - \ad^*_{\xi} \frac{\partial \ell}{\partial \xi} \right), \mu\right\rangle .
        \end{aligned}
    \end{equation}
\end{proof}

\section{The discrete Euler--Poincar\'e equations with advected parameters and additional dynamics}
\label{sec: DisEPreduction}
In this section, we show the discrete Euler--Poincar\'e equations with advected parameters and additional dynamics.
Before showing its derivation by the discrete variational principle, let us recall the group difference map and some of its properties,  following \cite{bou2009hamilton,bou2007hamilton,gawlik2011geometric}. 
\begin{definition}(Group difference map.)
    A local diffeomorphism $\tau: \g \to G$ that maps a neighborhood $\mathcal{N}$ of $0 \in \g$ to a neighborhood of identity $e \in G$, such that $\tau(0) = e$ and $\tau^{-1}(\xi) = \tau(-\xi)$ for all $\xi \in \mathcal{N}$, is called a group difference map.
\end{definition}

\begin{definition}(Right-trivialized tangent.)
    The right-trivialized tangent map $d\tau: \g \times \g \to \g$ of a group difference map $\tau$ is defined by
    \begin{equation}
        D \tau (\xi) \cdot \chi = R_{\tau(\xi)} d\tau_{\xi}(\chi).
    \end{equation}
\end{definition}
 This was  defined in  \cite{iserles2000lie} (Definition 2.19 therein), which  describes the differential of $\tau$ as a combination of a map $d\tau$ and the right transport of the tangent vector from $e$ to $\tau(\xi)$. 
    Moreover, the right-trivialized tangent satisfies the following relation
    \begin{equation}
        d\tau_{\xi}(\chi) = \Ad_{\tau(\xi)} d\tau_{-\xi}(\chi).
    \end{equation}

\begin{definition}(Inverse right-trivialized tangent.)
    The inverse right-trivialized tangent $d\tau^{-1}: \g \times \g \to \g$ of a group difference map $\tau$ is defined by 
    \begin{equation}
        D\tau^{-1}\left( \tau(\xi)\right) \cdot \chi = d\tau^{-1}_{\xi} \left(R_{\tau(-\xi)} \chi\right).
    \end{equation}
\end{definition}
This definition describes that the differential of $\tau^{-1}$ can be decomposed into the right transport of the tangent vector from $\tau(\xi)$ to $e$ and the map $d\tau^{-1}_{\xi}$. 
    The inverse right-trivialized tangent satisfies
    \begin{equation}\label{eq: dtau_Ad}
        d\tau^{-1}_{\xi} (\chi) = d\tau^{-1}_{-\xi} (\text{Ad}_{\tau(-\xi)} \chi).
    \end{equation}
For details of the above properties, see \cite{bou2007hamilton, iserles2000lie}. 

Now we consider reduced discrete Lagrangians
and derive the discrete Euler--Poincar\'e equations with  advected parameters and  addition dynamics,
which extends the results of \cite{gawlik2011geometric}.
Let $\{t_k\}_{k=0}^N $ be the discretized time for $t\in[0,T]$ and $h:= t_{k+1} - t_k=T/N$ be a fixed time step.
Consider discrete series $\{g_k = g\left( t_k \right)\}$ on the Lie group $G$ and $\{q_k = q\left( t_k \right)\}$ on the additional configuration manifold $Q$.
 (Left) $G$-invariance of the discrete Lagrangian $L^d: \left( G\times G \right) \times \left( Q\times Q \right)  \times V^* \rightarrow \mathbb{R}$, namely
 \begin{equation}
      L^d(\tilde{g}g_k,\tilde{g}g_{k+1},\tilde{g}q_k,\tilde{g}q_{k+1},\tilde{g}a_0) = L^d(g_k,g_{k+1},q_k,q_{k+1},a_0) \text{ for any } \tilde{g}\in G, 
 \end{equation}
 defines
the (left) reduced discrete Lagrangian $\ell^d: \g \times  (Q \times Q) \times V^* \rightarrow \mathbb{R}$ by taking $\tilde{g}=g^{-1}_k$,
\begin{equation}\label{eq: dis_lag}
    \begin{aligned}
        \ell^d \left(\xi_k, n_k, s_k,  a_k \right) = L^d \left(g_k^{-1} g_k, \xi_k, n_k, s_k,  a_k  \right),
    \end{aligned}
\end{equation}
denoted by  $\ell^d_k$,
with
\begin{equation} \label{eq: dis_nk}
    n_k := g_k^{-1} q_{k},\quad 
 s_k := g_k\inverse q_{k+1}, \quad  
    \xi_k := \frac{1}{h}\tau^{-1}(g_k^{-1} g_{k+1}), 
\end{equation}
and the advected parameter
\begin{equation} \label{eq: dis_advected}
    a_k := g_k^{-1} a_0 \in V^*.
\end{equation}
Note that the $\xi_k$ defined in \eqref{eq: dis_nk}, which can be equivalently written as
    \begin{equation}
 g_{k+1} = g_k \tau(h \xi_k),
    \end{equation}
    is the forward Euler formula on Lie groups for $\xi=g^{-1}\dot{g}\in \g$ introduced in the continuous setting (e.g., \cite{bou2009hamilton,bou2007hamilton}). 


The corresponding discrete functional $\mathcal{S}^d_{a_0}: G^{N+1}\times Q^{N+1} \times Q^{N}  \to \mathbb{R}$ reads
\begin{equation} \label{eq: dis_re_AS}
    \mathcal{S}^d_{a_0} \left(\{g_k\}_{k=0}^N, \{n_k\}_{k=0}^N, \{s_k\}_{k=0}^{N-1} \right) = \sum_{k=0}^{N-1} \ell^d ( \xi_k,n_k, s_k, a_k) h = \sum_{k=0}^{N-1} \ell^d_k h.
\end{equation}
%
The discrete variational principle is expressed as $\delta \mathcal{S}^d_{a_0} = 0$ 
for arbitrary variations $\delta g_k$ and $\delta q_k$, and subject to fixed endpoint conditions $\delta g_0 = \delta g_N =0$ and $\delta q_0 = \delta q_N = 0$.
This leads to the following discrete Euler--Poincar{\'e} equations through discrete variational calculus.
\begin{thm} \label{thm: disEPeqwithAD}
    Let $G$ act from the left on the dual vector space $V^*$ and let the discrete Lagrangian $L^d: (G \times G) \times (Q\times Q) \times  V^* \to  \mathbb{R}$ be left $G$-invariant. 
    Let $\ell^d: \g \times (Q \times Q) \times  V^* \rightarrow \mathbb{R}$ be the reduced Lagrangian defined by \eqref{eq: dis_lag}. 
    Suppose $\{  \xi_k, n_{k}, s_k,  a_k\}_{k=0}^{N-1}$ is a discrete series satisfies $\delta \mathcal{S}^d_{a_0} = 0$  
    under variations $\delta g_k$ and $\delta q_k$ with $\delta g_0 = \delta g_N = 0$ and $\delta q_0 = \delta q_N = 0$. 
    Then we  obtain  the (left-left) discrete Euler--Poincar\'e equations with advected parameters and additional dynamics:
\begin{equation}\label{eq: disEPwithAD}
        \begin{aligned}
        \left( d \tau_{h\xi_k}^{-1} \right)^\ast \frac{\partial \ell^d_k}{\partial \xi_k} &  = \left( d \tau_{-h\xi_{k-1}}^{-1} \right)^\ast \frac{\partial \ell^d_{k-1}}{\partial \xi_{k-1}} 
            + h \dfrac{\partial \ell^d_k}{\partial a_k} \diamond a_k 
            + h \frac{\partial \ell^d_k}{\partial n_k} \diamond n_k 
            + h \dfrac{\partial \ell^d_k}{\partial s_k} \diamond s_k , \\
           \dfrac{\partial \ell^d_k}{\partial n_k} & = -\tau^*\left(h \xi_{k-1}\right)  \dfrac{\partial \ell^d_{k-1}}{\partial s_{k-1}}, 
        \end{aligned}
    \end{equation}
    together with the complementary relation
    \begin{equation}
        n_{k+1} =  \tau\left(- h \xi_k\right) s_k,
    \end{equation}
    and $\{a_k\}$ is given by the series   
    \begin{equation}\label{eq:disca}
        \ a_{k+1} = \tau(-h \xi_k) a_k,\quad a_0=a_0.
    \end{equation}
    Analogous to the continuous case \eqref{eq: concotdia}, the diamond operator $\diamond: T^*Q \times Q \to \gad$  corresponding to the Lie group G acting on the manifold $Q$  in the discrete setting is again defined pointwisely in $Q$ as
    \begin{equation}
        \left\langle \frac{\partial \ell^d_k}{\partial n_k}, \mu n_k \right\rangle _{T_{n_k}^*\! Q \times T_{n_k} \! Q}
        := - \left\langle \frac{\partial \ell^d_k}{\partial n_k} \diamond n_k, \mu \right\rangle _{\gad \times \g}, \ \mathrm{for} \  n_k\in Q,\ \dfrac{\partial \ell^d_k}{\partial n_k}  \in T_{n_k}^*\!Q,\ \mathrm{and}\ \mu \in \g.
    \end{equation}   
    The diamond operator in $\dfrac{\partial \ell^d_k}{\partial s_k} \diamond s_k$ is defined in a similar way but at another point $s_k\in Q$. Furthermore, the action $\tau^*(h\xi_{k-1})$ is defined by 
    \begin{equation}
       \left\langle \frac{\partial \ell_{k-1}^d}{\partial s_{k-1}}, \tau(h\xi_{k-1})w_k \right\rangle_{T^*_{s_{k-1}}Q\times T_{s_{k-1}}Q} := \left\langle \tau^*(h\xi_{k-1}) \frac{\partial \ell_{k-1}^d}{\partial s_{k-1}}, w_k\right\rangle_{T^*_{s_{k-1}}Q\times T_{s_{k-1}}Q}, 
    \end{equation}
   for $w_k\in T_{s_{k-1}}Q$  or $T_{n_k}Q$. 
\end{thm}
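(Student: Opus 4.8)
The plan is to carry out the discrete variational calculus directly: introduce the left-trivialized variations, express the induced variations of $\xi_k, n_k, s_k, a_k$ through them, insert everything into $\delta\mathcal{S}^d_{a_0}=0$, and after a discrete summation by parts exploit the arbitrariness of the interior variations. First I would set $\eta_k := g_k\inverse\delta g_k\in\g$ and $w_k := g_k\inverse\delta q_k\in T_{n_k}Q$; the fixed-endpoint conditions $\delta g_0=\delta g_N=0$ and $\delta q_0=\delta q_N=0$ translate into $\eta_0=\eta_N=0$ and $w_0=w_N=0$, which is what will kill the boundary terms later.

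Next I would compute the four reduced variations. Writing $f_k := g_k\inverse g_{k+1}=\tau(h\xi_k)$, a direct differentiation gives $\delta f_k = f_k\eta_{k+1}-\eta_k f_k$, whereas the definition of the right-trivialized tangent gives $\delta f_k = R_{f_k}\,d\tau_{h\xi_k}(h\,\delta\xi_k)$. Equating the two and right-translating by $f_k\inverse$ yields $d\tau_{h\xi_k}(h\,\delta\xi_k)=\Ad_{f_k}\eta_{k+1}-\eta_k$; inverting and using \eqref{eq: dtau_Ad} to recognize $d\tau^{-1}_{h\xi_k}(\Ad_{f_k}\eta_{k+1})=d\tau^{-1}_{-h\xi_k}(\eta_{k+1})$ collapses the adjoint factor and produces the key formula
\[
h\,\delta\xi_k = d\tau^{-1}_{-h\xi_k}(\eta_{k+1}) - d\tau^{-1}_{h\xi_k}(\eta_k).
\]
The other three follow from the Leibniz rule: $\delta n_k = w_k-\eta_k n_k$ and $\delta a_k = -\eta_k a_k = -\Liederi_{\eta_k}a_k$, while the identity $g_k\inverse\delta q_{k+1}=f_k w_{k+1}$ together with $f_k n_{k+1}=s_k$ gives $\delta s_k=\tau(h\xi_k)w_{k+1}-\eta_k s_k$.

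Then I would substitute these into $\delta\mathcal{S}^d_{a_0}=\sum_{k=0}^{N-1}h\,\delta\ell^d_k=0$ and rewrite each pairing so that it is tested against $\eta_k$ or $w_k$. The $\xi$-term contributes the dual maps $(d\tau^{-1}_{-h\xi_k})^\ast\partial\ell^d_k/\partial\xi_k$ paired with $\eta_{k+1}$ and $(d\tau^{-1}_{h\xi_k})^\ast\partial\ell^d_k/\partial\xi_k$ paired with $\eta_k$; the $n_k$, $s_k$ and $a_k$ terms are converted by the diamond operators and the advection pairing $\inpro{\partial\ell^d_k/\partial a_k,\ \Liederi_{\eta_k}a_k}=-\inpro{\partial\ell^d_k/\partial a_k\diamond a_k,\ \eta_k}$ into pairings against $\eta_k$ carrying a factor $h$; and the $\tau(h\xi_k)w_{k+1}$ piece is moved onto $w_{k+1}$ via the definition of $\tau^\ast(h\xi_k)$. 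A discrete summation by parts --- reindexing $k\mapsto k+1$ in the terms tested against $\eta_{k+1}$ and $w_{k+1}$ --- combined with $\eta_0=\eta_N=w_0=w_N=0$ restricts all sums to the interior indices $k=1,\dots,N-1$. Collecting the coefficient of each $\eta_k$ gives the first equation of \eqref{eq: disEPwithAD}, and collecting the coefficient of each $w_k$ gives $\partial\ell^d_k/\partial n_k=-\tau^\ast(h\xi_{k-1})\,\partial\ell^d_{k-1}/\partial s_{k-1}$; arbitrariness of $\eta_k$ and $w_k$ at every interior node then yields the stated system. Finally, the complementary relation $n_{k+1}=\tau(-h\xi_k)s_k$ and the advection recursion $a_{k+1}=\tau(-h\xi_k)a_k$ follow by direct substitution from $n_{k+1}=g_{k+1}\inverse q_{k+1}$, $s_k=g_k\inverse q_{k+1}$, $a_k=g_k\inverse a_0$ and $g_{k+1}=g_k\tau(h\xi_k)$, since $\tau(h\xi_k)\inverse=\tau(-h\xi_k)$.

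I expect the main obstacle to be the derivation of the key formula for $\delta\xi_k$: it requires manipulating the right-trivialized tangent of the group difference map and invoking \eqref{eq: dtau_Ad} to absorb the factor $\Ad_{f_k}$, which is precisely the mechanism that makes $d\tau^{-1}_{-h\xi_k}$ appear on the left of the discrete momentum balance while $d\tau^{-1}_{h\xi_k}$ appears on the right. The summation-by-parts bookkeeping is routine but delicate, since each interior $\eta_k$ and $w_k$ must gather contributions from both $\ell^d_k$ and $\ell^d_{k-1}$ with the correct signs; in particular one must track that the $s_k$-term lands on $w_{k+1}$, and hence on $w_k$ only after reindexing, which is what pairs $\partial\ell^d_{k-1}/\partial s_{k-1}$ with $\partial\ell^d_k/\partial n_k$ in the second equation.
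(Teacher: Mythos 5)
Your proposal is correct and follows essentially the same route as the paper's proof: the same left-trivialized variations $\eta_k=g_k\inverse\delta g_k$, $w_k=g_k\inverse\delta q_k$, the same expressions for $\delta\xi_k,\delta n_k,\delta s_k,\delta a_k$, and the same summation-by-parts collection of coefficients of $\eta_k$ and $w_k$. The only difference is that you derive the formula for $\delta\xi_k$ explicitly (and, incidentally, state it with the factor $h$ placed more carefully than the paper's displayed version), whereas the paper simply cites it from the literature.
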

\begin{proof}
    First, the variation of $\xi_k$ is obtained by using \eqref{eq: dtau_Ad} as \cite{ bou2007hamilton,gawlik2011geometric}
    \begin{equation}
        \begin{aligned}
            \delta \xi_k 
            = d \tau_{-h \xi_k}^{-1} (\eta_{k+1}) - d \tau_{h \xi_k}^{-1} (\eta_k) .
        \end{aligned}
    \end{equation} 
    Introducing the variations $\eta_k := g_k\inverse \delta g_k$ and  $w_k := g_k\inverse \delta q_k$, the variations of $n_k,\ s_k$ and $a_k$  can be calculated using the relations \eqref{eq: dis_nk} and \eqref{eq: dis_advected} directly, and we obtain 
    \begin{equation}
        \delta n_k 
       = w_k - \eta_k n_k,\quad 
        \delta s_k = \tau\left(h \xi_k\right) w_{k+1} -\eta_ks_k,\quad 
        \delta a_k = - \eta_k a_k.
    \end{equation}
    Using the fixed endpoint condition, the discrete variational principle then gives
    \begin{equation}
        \begin{aligned}
            0 & 
            = \delta \sum_{k=0}^{N-1} \ell^d_k h  = \sum_{k=0}^{N-1} \left\langle h \frac{\partial \ell^d_k}{\partial a_k}, \delta a_k \right\rangle 
            + \left\langle h \frac{\partial \ell^d_k}{\partial \xi_k}, \delta \xi_k \right\rangle 
            + \left\langle h \frac{\partial \ell^d_k}{\partial n_k}, \delta n_k \right\rangle 
            + \left\langle h \frac{\partial \ell^d_k}{\partial s_k}, \delta s_k \right\rangle \\
            & = \sum_{k=0}^{N-1} \left[ \left\langle h \frac{\partial \ell^d_k}{\partial a_k}, -\eta_k a_k \right\rangle 
            + \left\langle \frac{\partial \ell^d_k}{\partial \xi_k},  d \tau_{-h \xi_k}^{-1} (\eta_{k+1})- d \tau_{h \xi_k}^{-1} (\eta_k)  \right\rangle
            + \left\langle h \frac{\partial \ell^d_k}{\partial n_k},   w_k -\eta_k n_k\right\rangle \right. \\
            &\quad\quad\quad +  \left. \left\langle h \dfrac{\partial \ell^d_k}{\partial s_k},  \tau\left(h \xi_k\right) w_{k+1} - \eta_k s_k \right\rangle \right] \\
            & = \sum_{k=1}^{N-1} \left[ \left\langle h \frac{\partial \ell^d_k}{\partial a_k} \diamond a_k, \eta_k \right\rangle 
            + \left\langle - \left( d \tau_{h\xi_k}^{-1} \right)^\ast \frac{\partial \ell^d_k}{\partial \xi_k} + \left( d \tau_{-h\xi_{k-1}}^{-1} \right)^\ast \frac{\partial \ell^d_{k-1}}{\partial \xi_{k-1}}, \eta_{k} \right\rangle \right.\\
                    & \quad\quad
                    + \left\langle h \frac{\partial \ell^d_k}{\partial n_k}, w_k \right\rangle   
                    + \left\langle h \frac{\partial \ell^d_k}{\partial n_k} \diamond n_k, \eta_k \right\rangle 
                    + \left.\left\langle \tau^* \left(h \xi_{k-1}\right) h \dfrac{\partial \ell^d_{k-1}}{\partial s_{k-1}}, w_{k} \right\rangle    
                    +  \left\langle h \dfrac{\partial \ell^d_k}{\partial s_k} \diamond s_k, \eta_k \right\rangle \right], \\
                & = \sum_{k=1}^{N-1} \left[ \left\langle h \frac{\partial \ell^d_k}{\partial a_k} \diamond a_k - \left( d \tau_{h\xi_k}^{-1} \right)^\ast \frac{\partial \ell^d_k}{\partial \xi_k} 
                    + \left( d \tau_{-h\xi_{k-1}}^{-1} \right)^\ast \frac{\partial \ell^d_{k-1}}{\partial \xi_{k-1}} 
                    + h \frac{\partial \ell^d_k}{\partial n_k} \diamond n_k + h \dfrac{\partial \ell^d_k}{\partial s_k} \diamond s_k,
                    \eta_k \right\rangle \right. \\
        &\quad\quad \quad  + \left. \left\langle h \frac{\partial \ell^d_k}{\partial n_k} + \tau^*\left(h \xi_{k-1}\right) h \dfrac{\partial \ell^d_{k-1}}{\partial s_{k-1}}, w_{k} \right\rangle \right],
        \end{aligned}
    \end{equation}
    which yields the discrete Euler--Poincar\'e equations \eqref{eq: disEPwithAD}. 
\end{proof}

{\bf The discrete Kelvin--Noether theorem.}
The following theorem describes a discrete analogue of the continuous Kelvin--Noether Theorem \ref{thm: conKNthm}. 
Again, we assume that the Lie group $G$ is finite-dimensional, and hence $\g^{**} \cong \mathfrak{g}$.

\begin{thm}\label{thm: discreteKelvinNoether}
    Let $G$ be a Lie group that acts on a manifold $\mathcal{C}$  from the left and 
    suppose $\mathcal{K}: \mathcal{C} \times V^* \to \g^{**} \cong \g$ is an equivalent map, namely
    \begin{equation}
        \mathcal{K}(gc, ga) = \Ad_{g} \mathcal{K}(c, a)
    \end{equation}
    for any $g \in G$, $c \in \mathcal{C}$ and $a \in V^*$. 
    Suppose the discrete series $\{g_k, \xi_k, n_k, s_k, a_k\}_{k=0}^{N-1}$ satisfies the (left-left) discrete Euler--Poincar\'e equations \eqref{eq: disEPwithAD} and the discrete advected parameter dynamics \eqref{eq:disca}. 
    Fix $c_0 \in \mathcal{C}$ and define $c_k = g_k^{-1} c_0$. 
     We then define the discrete (left-left)  Kelvin--Noether quantity $I: \mathcal{C} \times \g \times (Q \times Q) \times V^* \to \bbbr$ by
    \begin{equation}
        \mathcal{I}(c_k, \xi_k, n_k, s_k, a_k) = \left\langle \mathcal{K}(c_k, a_k), (d \tau_{h \xi_k}^{-1})^* \frac{\partial \ell^d_k}{\partial \xi_k} - h\frac{\partial \ell^d_k}{\partial s_k} \diamond   s_k\right\rangle,
    \end{equation}
    where $\tau: \g \to G$  is a given group difference map.
    As a consequence, the discrete  Kelvin--Noether  quantity $\mathcal{I}_k := \mathcal{I}(c_k, \xi_k, n_k, s_k, a_k)$ satisfies
    \begin{equation}
        \mathcal{I}_k - \mathcal{I}_{k-1} = \left\langle \mathcal{K}(c_k, a_k), h \frac{\partial \ell^d_k}{\partial a_k} \diamond a_k\right\rangle.
    \end{equation}
\end{thm}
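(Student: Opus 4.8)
The plan is to mirror the structure of the continuous proof of Theorem \ref{thm: conKNthm}, replacing time differentiation by the forward difference $\mathcal{I}_k - \mathcal{I}_{k-1}$ and the continuous coadjoint-transport identity by its discrete counterpart built from the group difference map. The first move is to exploit the equivariance of $\mathcal{K}$: since $c_k = g_k\inverse c_0$ and $a_k = g_k\inverse a_0$, we have $\mathcal{K}(c_k, a_k) = \Ad_{g_k\inverse}\mathcal{K}(c_0, a_0)$, so that writing $X_k := (d\tau_{h\xi_k}^{-1})^*\frac{\partial \ell^d_k}{\partial \xi_k} - h\frac{\partial\ell^d_k}{\partial s_k}\diamond s_k \in \gad$ we obtain $\mathcal{I}_k = \langle \mathcal{K}(c_0,a_0),\, \Ad^*_{g_k\inverse}X_k\rangle$ with $\mathcal{K}(c_0,a_0)$ now fixed. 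The whole theorem thus reduces to the single telescoping identity $\Ad^*_{g_k\inverse}X_k - \Ad^*_{g_{k-1}\inverse}X_{k-1} = \Ad^*_{g_k\inverse}\big(h\frac{\partial\ell^d_k}{\partial a_k}\diamond a_k\big)$, after which a second application of equivariance re-expresses the right-hand side as $\langle \mathcal{K}(c_k,a_k),\, h\frac{\partial\ell^d_k}{\partial a_k}\diamond a_k\rangle$.

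To prove that identity I would first substitute the first line of \eqref{eq: disEPwithAD} into $X_k$, which cancels the $\diamond s_k$ term and leaves $X_k = (d\tau_{-h\xi_{k-1}}^{-1})^*\frac{\partial\ell^d_{k-1}}{\partial\xi_{k-1}} + h\frac{\partial\ell^d_k}{\partial a_k}\diamond a_k + h\frac{\partial\ell^d_k}{\partial n_k}\diamond n_k$. Next, using $g_k = g_{k-1}\tau(h\xi_{k-1})$, i.e.\ $g_k\inverse = \tau(-h\xi_{k-1})g_{k-1}\inverse$, together with the fact that in the present pairing convention $\Ad^*$ is an anti-homomorphism, $\Ad^*_{g_1 g_2} = \Ad^*_{g_2}\Ad^*_{g_1}$, I factor $\Ad^*_{g_k\inverse} = \Ad^*_{g_{k-1}\inverse}\Ad^*_{\tau(-h\xi_{k-1})}$. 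Everything then reduces to two point-wise identities at the level of $\tau(-h\xi_{k-1})$.

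The first is $\Ad^*_{\tau(-h\xi_{k-1})}(d\tau_{-h\xi_{k-1}}^{-1})^* = (d\tau_{h\xi_{k-1}}^{-1})^*$, which is simply the dual of relation \eqref{eq: dtau_Ad}; it turns the $\xi_{k-1}$-term of $\Ad^*_{g_k\inverse}X_k$ into the $\xi$-part of $\Ad^*_{g_{k-1}\inverse}X_{k-1}$. The second, and the real heart of the argument, is the discrete diamond-equivariance identity $\Ad^*_{\tau(-h\xi_{k-1})}\big(\frac{\partial\ell^d_k}{\partial n_k}\diamond n_k\big) = -\frac{\partial\ell^d_{k-1}}{\partial s_{k-1}}\diamond s_{k-1}$. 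To obtain it I would feed in the second discrete Euler--Poincar\'e equation $\frac{\partial\ell^d_k}{\partial n_k} = -\tau^*(h\xi_{k-1})\frac{\partial\ell^d_{k-1}}{\partial s_{k-1}}$ together with the complementary relation $n_k = \tau(-h\xi_{k-1})s_{k-1}$, and then unwind the definitions by testing against an arbitrary $\mu \in \g$: the coadjoint moves onto $\mu$, the $\tau^*$-action transports the tangent vector $(\Ad_{\tau(-h\xi_{k-1})}\mu)n_k$ through $\tau(h\xi_{k-1})$, and the infinitesimal-generator computation $\tau(h\xi_{k-1})\big[(\Ad_{\beta}\mu)n_k\big] = \mu s_{k-1}$, with $\beta = \tau(-h\xi_{k-1})$ (which uses $\exp(t\Ad_{\beta}\mu) = \beta\exp(t\mu)\beta\inverse$ and $n_k = \beta s_{k-1}$), collapses the left pairing to $-\langle \frac{\partial\ell^d_{k-1}}{\partial s_{k-1}}, \mu s_{k-1}\rangle = \langle \frac{\partial\ell^d_{k-1}}{\partial s_{k-1}}\diamond s_{k-1}, \mu\rangle$.

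I expect this second identity to be the main obstacle: it is where the additional ($Q$-)dynamics genuinely interacts with the group structure, and it requires careful bookkeeping of base points (the diamond at $n_k$ versus at $s_{k-1}$) and of which group element $\tau(\pm h\xi_{k-1})$ transports cotangent versus tangent vectors --- precisely the discrete shadow of the continuous step involving $\Liedualxi\frac{\partial\ell}{\partial\nu}\diamond n$ and $\frac{\partial\ell}{\partial\nu}\diamond\nu$. Once these two parts are in hand, the two non-$a$ contributions to $\Ad^*_{g_k\inverse}X_k$ assemble exactly into $\Ad^*_{g_{k-1}\inverse}X_{k-1}$, leaving only the advected-parameter term, and re-applying the equivariance of $\mathcal{K}$ yields $\mathcal{I}_k - \mathcal{I}_{k-1} = \langle \mathcal{K}(c_k,a_k), h\frac{\partial\ell^d_k}{\partial a_k}\diamond a_k\rangle$, completing the proof.
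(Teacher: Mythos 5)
Your proposal is correct and follows essentially the same route as the paper's proof: both reduce the difference $\mathcal{I}_k-\mathcal{I}_{k-1}$ via the equivariance of $\mathcal{K}$ to the relative group element $\tau(h\xi_{k-1})=g_{k-1}^{-1}g_k$, apply the dual of relation \eqref{eq: dtau_Ad} to the $(d\tau^{-1})^*$ terms, and cancel the $n$- and $s$-diamond contributions using both discrete Euler--Poincar\'e equations together with the complementary relation $n_k=\tau(-h\xi_{k-1})s_{k-1}$ --- your ``discrete diamond-equivariance identity'' is precisely the step the paper performs when it rewrites $h\frac{\partial\ell^d_{k-1}}{\partial s_{k-1}}\diamond\tau(h\xi_{k-1})n_k$ as $\tau^*(h\xi_{k-1})h\frac{\partial\ell^d_{k-1}}{\partial s_{k-1}}\diamond n_k$ and then invokes the second Euler--Poincar\'e equation. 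The only blemish is an intermediate sign in your prose for that identity (after inserting $\frac{\partial\ell^d_k}{\partial n_k}=-\tau^*(h\xi_{k-1})\frac{\partial\ell^d_{k-1}}{\partial s_{k-1}}$ the pairing collapses to $+\left\langle\frac{\partial\ell^d_{k-1}}{\partial s_{k-1}},\mu s_{k-1}\right\rangle$, not its negative), but your displayed identity $\Ad^*_{\tau(-h\xi_{k-1})}\left(\frac{\partial\ell^d_k}{\partial n_k}\diamond n_k\right)=-\frac{\partial\ell^d_{k-1}}{\partial s_{k-1}}\diamond s_{k-1}$ is the correct one and is exactly what the telescoping argument requires.
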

\begin{proof}
    By equivalence of the map $\mathcal{K}$, we have 
    \begin{equation*}
        \begin{aligned}
            \mathcal{I}_k  - & \mathcal{I}_{k-1}  =  \left\langle \mathcal{K}(c_k, a_k), (d \tau_{h \xi_k}^{-1})^* \frac{\partial \ell^d_k}{\partial \xi_k} -  h\frac{\partial \ell^d_k}{\partial s_k} \diamond  s_k\right\rangle \\
            & \quad \quad 
            - \left\langle \Ad_{g^{-1}_{k-1} g_k} \mathcal{K}(c_{k}, a_{k}), (d \tau_{h \xi_{k-1}}^{-1})^* \frac{\partial \ell^d_{k-1}}{\partial \xi_{k-1}} -  h\frac{\partial \ell^d_{k-1}}{\partial s_{k-1}} \diamond \tau(h \xi_{k-1}) n_{k} \right\rangle \\
            & = \left\langle \mathcal{K}(c_k, a_k), (d \tau_{h \xi_k}^{-1})^* \frac{\partial \ell^d_k}{\partial \xi_k} - \left( d \tau_{-h\xi_{k-1}}^{-1} \right)^\ast \frac{\partial \ell^d_{k-1}}{\partial \xi_{k-1}} \right\rangle \\
            &\quad \quad 
            + \left\langle \mathcal{K}(c_k, a_k), - h\frac{\partial \ell^d_k}{\partial s_k} \diamond  s_k + \tau^*(h\xi_{k-1}) h \frac{\partial \ell^d_{k-1}}{\partial s_{k-1}} \diamond n_{k}  \right\rangle \\
            & = \left\langle \mathcal{K}(c_k, a_k), h \frac{\partial \ell^d_k}{\partial a_k} \diamond a_k + h \frac{\partial \ell^d_k}{\partial n_k} \diamond n_k + h\dfrac{\partial \ell^d_k}{\partial s_k} \diamond s_k\right\rangle 
            + \left\langle \mathcal{K}(c_k, a_k),- h \dfrac{\partial \ell^d_k}{\partial s_k} \diamond s_k - h \frac{\partial \ell^d_k}{\partial n_k} \diamond n_k \right\rangle \\
            & = \left\langle \mathcal{K}(c_k, a_k), h \frac{\partial \ell^d_k}{\partial a_k} \diamond a_k \right\rangle.
        \end{aligned}
    \end{equation*}
    Note that in the second equality, we have utilized the relation \eqref{eq: dtau_Ad}; in the third equality, the discrete Euler--Poincar\'e equations \eqref{eq: disEPwithAD} have been substituted.
\end{proof}

\section{Application to the dynamics of underwater vehicles} \label{sec: Application2UV}

In this section, 
the proposed Euler--Poincar\'e reduction  for systems with advected parameters and additional dynamics is applied to study the behavior of underwater (or submersible) vehicles. 
Underwater vehicle dynamics has been studied from both practical and theoretical perspectives (e.g., \cite{cristi1990, fossen1995,panda2021review}), 
particularly using geometric mechanics (e.g., \cite{leonard1997, amit}). 
We consider underwater vehicles with added mass, also known as hydrodynamic or virtual mass, which affects the inertia matrix of the vehicles. Additionally, we assume that the center of gravity of the underwater vehicle does not coincide with its center of buoyancy.  This is a practical design feature, as underwater vehicles are typically constructed with the center of gravity positioned below the center of buoyancy (i.e., bottom-heavy) to ensure stability \cite{leonard1997}.

In this case, the Lie group $G$ is the 3-dimensional special orthogonal group 
\begin{equation}
    \SO(3) := \left\{R \in M(3,\mathbb{R}) \mid R\trans R = I, \det(R)=1 \right\}
\end{equation} 
and its associated Lie algebra is
\begin{equation}
    \so(3) := \left\{\Omega \in M(3,\mathbb{R}) \mid \Omega = -\Omega\trans \right\},
\end{equation}
where $I$ is the $3\times 3$ identity matrix.
In the dynamics of underwater vehicles, the rotation matrix $R \in \SO(3)$ represents the attitude of the body and also serves as a map from a reference configuration to the body configuration.
The differential of $R\trans R = I$ with respect to $t$ reads 
\begin{equation}
    \left( R\trans \dot{R} \right)\trans + R\trans \dot{R} =0.
\end{equation}
This implies that the body angular velocity matrix  $\Omega = R\trans \dot{R} = R\inverse \dot{R} \in \so(3)$, which is related to the body angular velocity vector $\omega \in \bbbrrr$
through the Lie algebra isomorphism
\begin{equation} \label{eq: skewmap}
\begin{aligned}
    ()\skewmap:  \left( \bbbrrr, \times \right) &\to \left(\so(3),[\cdot,\cdot] \right)\\
     \omega=(\omega^1,\omega^2,\omega^3)^{\trans}  &\mapsto \omega\skewmap =\Omega=\left(
   \begin{array}{ccc}
   0 & -\omega^3 & \omega^2\\
   \omega^3 & 0 & -\omega^1\\
   -\omega^2 & \omega^1 & 0
   \end{array}
   \right).
    \end{aligned}
\end{equation}
In this paper, we follow \cite{gawlik2011geometric} and define the pairing between $\so(3)$ and its dual as $\left\langle \Omega, \Psi \right\rangle := \dfrac12 \tr\left(\Omega\trans \Psi\right)$ such that 
$\left\langle u,v \right\rangle = \left\langle u\skewmap, v\skewmap\right\rangle$ for any $u,v\in\mathbb{R}^3$,
where $\tr$ denotes the trace of matrices.

{\bf{Continuous case.}} In the dynamics of underwater vehicles, the kinetic energy $K$ is expressed as
\begin{equation}
    K = \frac{m}{2} \langle \dot{q}, \dot{q} \rangle + \frac{1}{2} \dot{q}\trans R M_A R\trans \dot{q} + \frac{1}{2} \tr \left( \dot{R} \hat{J} \dot{R}\trans \right),
\end{equation}
where the first and second terms come from linear motion of vehicles and the hydrodynamic added mass, and the third term represents the rotational kinetic energy. Inner product of vectors in Euclidean spaces is denoted by $\langle \cdot, \cdot \rangle$.
The potential energy $U$ corresponding to the gravitational and buoyancy restoring forces is given by
\begin{equation}
    U = - (m\mathrm{g} - \rho |\mathcal{V}|\mathrm{g}) \langle q, e_z \rangle+ \rho |\mathcal{V}|\mathrm{g} \langle R r, e_z \rangle,
\end{equation}
where the term $-m\mathrm{g} \langle q, e_z \rangle$ represents the gravitational potential energy and the remaining term represent the buoyant potential energy.
Here, we denote 
the vehicle's mass by $m \in \mathbb{R}^+$, its volume by $|\mathcal{V}|$, the hydrodynamic added mass matrix by $M_A \in M(3,\mathbb{R})$ assumed to be symmetric for simplicity, and the fluid's density by $\rho$.
The constant  vector $r\in \bbbrrr$ denotes the position of the center of buoyancy as seen from the (center of the gravity of) vehicle, and the gravitational constant is $\mathrm{g}$. 
The vector $e_z$ is the unit vector along the vertical axis of the space-fixed frame. 
Furthermore, the modified inertia tensor $\hat{J} $ is related to the inertia tensor $J$  in the space-fixed frame  via
\begin{equation}
    \hat{J} =\frac{1}{2} \tr(J)I - J.
\end{equation}
Note that while the inertia tensor $J$ is a symmetric, positive-definite matrix, the modified inertia tensor $\hat{J}$ is not necessary positive-definite. 

The total energy is $E=K+U$, while 
the Lagrangian $L: T(G\times Q) \times V^* \to \mathbb{R} $ is defined by 
\begin{equation} \label{eq: con_lag_UV}
    \begin{aligned}
        L(R, \dot{R}, q, \dot{q}, e_z) & =K-U\\
        &=\frac{m}{2} \langle \dot{q}, \dot{q} \rangle + \frac{1}{2} \dot{q}\trans R M_A R\trans \dot{q} + \frac{1}{2} \tr \left( \dot{R} \hat{J} \dot{R}\trans \right) 
        + (m\mathrm{g} - \rho |\mathcal{V}|\mathrm{g}) \langle q, e_z \rangle - \rho |\mathcal{V}|\mathrm{g} \langle R r, e_z \rangle .
    \end{aligned}
\end{equation}
Here, $G=\SO(3)$, $Q=\mathbb{R}^3$ and  $V^*=\mathbb{R}^3$. 
Left-invariance of the Lagrangian can be immediately checked that
\begin{equation}
    L( \tilde{R} R, \tilde{R}  \dot{R}, \tilde{R}  q, \tilde{R}  \dot{q}, \tilde{R}  e_z) = L(  R,  \dot{R}, q, \dot{q}, e_z) \text{ for } \forall \tilde{R} \in \SO(3),
\end{equation}
and the reduced Lagrangian $\ell: \g \times TQ \times  V^* \to \mathbb{R} $ is defined  by taking $\tilde{R} =R^{-1}$ in the action as
\begin{equation}\label{eq: con_red_lag_UV}
    \begin{aligned}
        \ell  (\Omega, n, \nu, a) & := L(R^{-1} R, R^{-1} \dot{R}, R^{-1} q, R^{-1} \dot{q},  R^{-1} e_z) \\
        & ~ = \frac{m}{2} \langle \nu, \nu \rangle + \frac{1}{2} \nu\trans M_A \nu + \frac{1}{2} \tr \left( \Omega \hat{J} \Omega\trans \right) + (m\mathrm{g} - \rho |\mathcal{V}|\mathrm{g}) \langle n, a\rangle  - \rho |\mathcal{V}|\mathrm{g} \langle r, a \rangle, \\
    \end{aligned}
\end{equation}
where $\Omega = R^{-1} \dot{R}, \ n = R^{-1} q,\ \nu = R^{-1} \dot{q}$, and the advected parameter is $a = R^{-1} e_z$. 

Defining $w =  R^{-1} \delta q $ and $\eta = R^{-1} \delta R$, we can derive the variations of $\Omega,\ n,\ \nu, \andrm a$, analogous to \eqref{eq:variadd} as follows:
\begin{equation} \label{eq: vari_EP}
    \begin{aligned}
      \delta \Omega & 
= \dot \eta + \ad_\Omega \eta, \quad
        \delta n  = w - \eta n , \quad
        \delta \nu 
        = \dot w + \mathcal{L}_{\Omega} w - \mathcal{L}_{\eta} \nu, \quad
        \delta a  = 
        - \eta a  =  -\mathcal{L}_{\eta} a.
    \end{aligned}
\end{equation}
Define the reduced functional as
\begin{equation}\label{eq: reduced_functional}
    \mathcal{S} = \int_{0}^{T} \ell  (\Omega, n, \nu,  a) dt,
\end{equation}
and the Euler--Poincar\'e equations \eqref{eq: conEPwithAD} for underwater vehicle dynamics can directly be  calculated, which read
\begin{equation}  \label{eq: continuousEPforUV}
    \begin{aligned}
         J \dot \omega & = \left(J \omega\right) \times \omega  - \rho |\mathcal{V}|\mathrm{g} (r \times a ) + \left(mI + M_A\right)\nu \times  \nu , \\
        \left(mI + M_A\right) \dot \nu & = (m\mathrm{g} - \rho |\mathcal{V}|\mathrm{g})a - \omega \times \left(mI + M_A\right) \nu, \\
    \end{aligned}
\end{equation}
and $a$ satisfies $\dot{a}=-\omega\times a$ with initial values $a(0)=e_z$,
where $\omega\skewmap = \Omega$. The complementary relation between $\omega,n,\nu$ reads $\dot n =\nu -\omega \times n$.
The following lemma is applied here (e.g., \cite{gawlik2011geometric}).

\begin{lemma}
For any $u, v\in \mathbb{R}^3$, 
\begin{itemize}
    \item  $u\skewmap v=u\times v$;
    \item the diamond operator $\diamond: \mathbb{R}^3 \times \mathbb{R}^3 \to \so(3)$ is given by $u\diamond v = (u \times v) \skewmap=[u\skewmap,v\skewmap]$.
\end{itemize}
\end{lemma}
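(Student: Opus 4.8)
The plan is to handle the two claims separately, since both reduce to direct computation once the pairing identifications are made explicit. For the first claim, I would simply carry out the matrix--vector product $u\skewmap v$ using the explicit form of the hat map from \eqref{eq: skewmap}. Writing $v=(v^1,v^2,v^3)\trans$, the three entries of $u\skewmap v$ come out as $u^2 v^3 - u^3 v^2$, $u^3 v^1 - u^1 v^3$, and $u^1 v^2 - u^2 v^1$, which is exactly the component form of the cross product $u\times v$. This is a one-line verification requiring no further structure.

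For the second claim, I would first note that the hat map $(\cdot)\skewmap$ is declared in \eqref{eq: skewmap} to be a Lie algebra isomorphism $(\bbbrrr,\times)\to(\so(3),[\cdot,\cdot])$, so by definition of a homomorphism it gives $[u\skewmap,v\skewmap]=(u\times v)\skewmap$ immediately; alternatively one expands $u\skewmap v\skewmap - v\skewmap u\skewmap$ and compares entries, using the first claim. It then remains to show $u\diamond v=(u\times v)\skewmap$. Here I would start from the defining pairing of the diamond operator, $\langle b,\mathcal{L}_\xi a\rangle=-\langle b\diamond a,\xi\rangle$, specialized to $G=\SO(3)$ acting on $V^*=\bbbrrr$. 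The Lie algebra action is $\mathcal{L}_\xi a=\xi a=\omega\times a$ for $\xi=\omega\skewmap$, again by the first claim. Pairing against an arbitrary $b\in\bbbrrr$ and applying the scalar triple product identity $b\cdot(\omega\times a)=-(b\times a)\cdot\omega$, together with the compatibility $\langle u\skewmap,v\skewmap\rangle=\langle u,v\rangle$ of the trace pairing with the Euclidean inner product (stated just above the lemma), I would rewrite the right-hand side as $-\langle (b\times a)\skewmap,\omega\skewmap\rangle$. Since $\omega$, and hence $\xi=\omega\skewmap$, is arbitrary and the pairing is nondegenerate, this forces $b\diamond a=(b\times a)\skewmap$, i.e. the asserted formula with $u=b$ and $v=a$.

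The computations are elementary, so there is no genuine obstacle; the only real care required is sign bookkeeping and the systematic use of the trace pairing to identify $\so(3)^*\cong\so(3)\cong\bbbrrr$. In particular, the transposition sign in the triple product, $\det[b,\omega,a]=-\det[b,a,\omega]$, must be tracked precisely so as to reproduce the minus sign in the definition of $\diamond$, and one must keep in mind that the diamond pairing lives on $\gad\times\g$ while the action pairing lives on $V\times V^*$, both of which are realized here as copies of the Euclidean pairing on $\bbbrrr$.
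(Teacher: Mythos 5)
Your proposal is correct. Note, however, that the paper does not actually prove this lemma --- it states it and defers to the reference \cite{gawlik2011geometric} --- so there is no in-paper argument to compare against; your write-up supplies the standard direct verification that the citation points to. Both computations check out: the componentwise identification $u\skewmap v = u\times v$, the homomorphism property $[u\skewmap,v\skewmap]=(u\times v)\skewmap$, and the derivation of $b\diamond a=(b\times a)\skewmap$ from the defining pairing $\left\langle b,\Liederi_\xi a\right\rangle=-\left\langle b\diamond a,\xi\right\rangle$ via the triple-product identity $b\cdot(\omega\times a)=-(b\times a)\cdot\omega$ and the normalization $\left\langle u,v\right\rangle=\left\langle u\skewmap,v\skewmap\right\rangle$ of the trace pairing are all handled with the correct signs.
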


{\bf{Discrete case.}}
We discretize the translational velocity $\dot q_k$ by the forward difference 
\begin{equation}
    \dot q_k = \frac{q_{k+1}-q_k}{h},
\end{equation}
and the relation $\dot{R}= R \Omega$ by the forward Euler formula
\begin{equation} \label{eq: LieEuler}
    R_{k+1} = R_k \tau(h\Omega_k),
\end{equation}
and hence $\Omega_k=\dfrac{1}{h}\tau^{-1}\left(R_k^{-1}R_{k+1}\right)$.
The discrete kinetic energy is given by
\begin{equation}
    \begin{aligned}
    K_k = \frac{m}{2} \left\langle \frac{q_{k+1} - q_k}{h},\frac{ q_{k+1} - q_k}{h} \right\rangle 
    + \frac{1}{2} \left(\frac{q_{k+1} - q_k}{h}\right)\trans R_kM_A R_k^{\trans} \left(\frac{q_{k+1} - q_k}{h}\right) 
    + \frac{1}{2} \tr \left( \Omega_k\hat{J} \Omega_k\trans \right) ,
    \end{aligned}
\end{equation}
while the discrete potential energy $U_k$ is given by 
\begin{equation}
    U_k = - \left(m\mathrm{g} - \rho |\mathcal{V}|\mathrm{g}\right) \langle q_k, e_z \rangle + \rho |\mathcal{V}|\mathrm{g} \langle R_k r, e_z \rangle.
\end{equation}
This gives the discrete total energy $E_k = K_k + U_k$, and the discrete Lagrangian 
\begin{equation}
    L^d: (G\times G) \times (Q\times Q) \times V^* \to \mathbb{R},\quad L^d  \left( R_k, R_{k+1}, q_k, q_{k+1}, e_z \right) = K_k - U_k.
\end{equation}
Invariance of the discrete Lagrangian defines a reduced discrete Lagrangian $\ell^d: \g \times (Q \times Q) \times  V^* \to \mathbb{R}$, denoted by $\ell^d_k=\ell^d \left( \Omega_k,  n_k, s_k, a_k \right)$, as follows
\begin{equation}\label{eq: dis_lag_UV}
    \begin{aligned}
         \ell^d \left( \Omega_k,  n_k, s_k, a_k \right) 
        & = L^d \left( R_k\inverse R_k, R_k\inverse R_{k+1}, R_k\inverse q_k, R_k\inverse q_{k+1}, R_k\inverse e_z \right)\\
        & = \frac{m}{2} \left\langle \frac{s_k - n_k}{h}, \frac{s_k - n_k}{h} \right\rangle 
        + \frac{1}{2} \left(\frac{s_k - n_k}{h}\right)\trans M_A \left(\frac{s_k - n_k}{h}\right)
        + \frac{1}{2} \tr \left( \Omega_k \hat{J} \Omega_k \trans \right) \\
        & \quad 
        + \left(m\mathrm{g} - \rho |\mathcal{V}|\mathrm{g}\right) \left\langle n_k, a_k \right\rangle - \rho |\mathcal{V}|\mathrm{g} \langle r, a_k \rangle,
    \end{aligned}
\end{equation}
where 
\begin{equation}
    n_k =  R_k^{-1} q_k, \ s_k = R_k^{-1}q_{k+1}, \ \Omega_k = \dfrac{1}{h}\tau\inverse\left(R_k^{-1}R_{k+1}\right),\ a_k = R_k^{-1} e_z.
\end{equation}
Substituting the Lagrangian \eqref{eq: dis_lag_UV} to \eqref{eq: disEPwithAD} amounts to  the discrete Euler--Poincar\'e equations for the underwater vehicle dynamics as follows: 
\begin{equation} \label{eq: discreteEPforUV}
    \begin{aligned}
        & \left( d \tau_{h\Omega_k}^{-1} \right)^\ast \Pi_{k} = \left( d \tau_{-h\Omega_{k-1}}^{-1} \right)^\ast \Pi_{k-1}
        - h \rho |\mathcal{V}|\mathrm{g} \left( r \times a_k\right)\skewmap 
        + \frac{1}{h} \left(  \left(mI + M_A\right)\left(s_k - n_{k}\right) \times\left(s_k - n_{k}\right) \right)\skewmap,\\
        & \left(mI + M_A\right)\left(s_k - n_{k}  \right)
        = \tau\left(- h\Omega_{k-1}\right) \left(mI + M_A\right) \left( s_{k-1} - n_{k-1}\right) 
        + h^2 \left(m\mathrm{g} - \rho |\mathcal{V}| \mathrm{g} \right) a_k,
    \end{aligned}
\end{equation}
together with $n_{k+1} =  \tau\left(- h \Omega_k\right) s_k$,
and the advected parameter satisfies $a_{k+1}=R_{k+1}^{-1}R_ka_k=\tau(-h\Omega_k)a_k$,
 where $\Pi_k = \hat{J} \Omega_k + \Omega_k \hat{J} \in \so(3)$.

\begin{remark}
Since the additional dynamics is in the linear space $Q=\mathbb{R}^3$, we may introduce the quantity $\nu_k = \dfrac{s_k - n_k}{h}$, and the equations above can be rewritten as 
\begin{equation} \label{eq: discreteEPforUV_with_nu}
    \begin{aligned}
         \left( d \tau_{h\Omega_k}^{-1} \right)^\ast \Pi_{k} &= \left( d \tau_{-h\Omega_{k-1}}^{-1} \right)^\ast \Pi_{k-1}
        - h \rho |\mathcal{V}|\mathrm{g} \left( r \times a_k\right)\skewmap 
        + h\left( \left(mI + M_A\right)\nu_k \times \nu_k  \right)\skewmap, \\
         \left(mI + M_A\right)\nu_k
        &= \tau\left(- h\Omega_{k-1}\right) \left(mI + M_A\right) \nu_{k-1}
        + h \left(m\mathrm{g} - \rho |\mathcal{V}| \mathrm{g} \right) a_k .
    \end{aligned}
\end{equation}
\end{remark}

{\bf{Discrete Kelvin--Noether quantities.}}
Let $G = \SO(3)$ act on $\mathcal{C} := \so(3)$ through the adjoint representation 
\begin{equation}
  (R_k^{-1},c_0) \mapsto c_k = R_k^{-1} w_0\skewmap R_k \in \mathcal{C}  
\end{equation}  for  $c_0=w_0\skewmap\in \so(3)$ where $w_0 \in \bbbrrr$.
Let $\mathcal{K}: \so(3) \times V^* \to \so(3)$ ($V^*=\mathbb{R}^3$) be the equivariant map $(c_k, a_k) \mapsto c_k$. 
Define
\begin{equation}
\begin{aligned}
     \mathcal{I}(c_k, \xi_k, n_k, s_k, a_k) &=\left\langle  R_k^{-1} w_0\skewmap R_k,\left(d\tau^{-1}_{h \Omega_k}\right)^* \Pi_k  - h\frac{\partial \ell^d_{k}}{\partial s_k} \diamond s_k\right\rangle \\
     &=\left\langle  R_k^{-1} w_0\skewmap R_k,\left(d\tau^{-1}_{h \Omega_k}\right)^* \Pi_k  - \frac{1}{h} \left( (mI+M_A)(s_k-n_k)\times s_k\right)\skewmap\right\rangle  , 
     \end{aligned}
\end{equation}
the discrete Kelvin--Noether Theorem \ref{thm: discreteKelvinNoether} yields
\begin{equation*}\label{eq: exmaple_disKN}
    \begin{aligned}
          \mathcal{I}_k-\mathcal{I}_{k-1} &= \left\langle
          R_k^{-1} w_0\skewmap R_k , h \frac{\partial \ell^d_{k}}{\partial a_k} \diamond a_k
          \right\rangle\\
        &  
        = \left\langle  R_k^{-1} w_0\skewmap R_k, h\left(m\mathrm{g} - \rho |\mathcal{V}|\mathrm{g}\right) \left( n_k\times a_k\right)\skewmap - h\rho |\mathcal{V}|\mathrm{g} \left( r \times a_k\right)\skewmap \right\rangle.
    \end{aligned}
\end{equation*}
Note that $R_k^{-1} w_0^{\skewmap} R_k = \left(R_k^{-1}w_0\right)^{\skewmap}$, which becomes 
$ \left(R_k^{-1}a_0\right)^{\skewmap}=a_k^{\skewmap}$ by choosing $w_0$ as the initial  value $a_0 = e_z$. Consequently, $\mathcal{I}_k-\mathcal{I}_{k-1}=0$, namely,
\begin{equation}
\begin{aligned}
\mathcal{I}_k&=\left\langle  R_k^{-1} e_z\skewmap R_k,\left(d\tau^{-1}_{h \Omega_k}\right)^* \Pi_k  - \frac{1}{h} \left( (mI+M_A)(s_k-n_k)\times s_k\right)\skewmap\right\rangle \\
&=\left\langle   e_z\skewmap, R_k\left(\left(d\tau^{-1}_{h \Omega_k}\right)^* \Pi_k  - \frac{1}{h} \left( (mI+M_A)(s_k-n_k)\times s_k\right)\skewmap \right)R_k^{-1}\right\rangle
\end{aligned}
\end{equation}
is a constant of motion.
Thus, equations \eqref{eq: discreteEPforUV} preserve the $e_z$-component of the vector, which is  the inverse of the following matrix by using the Lie algebra isomorphism \eqref{eq: skewmap}:
\begin{equation}\label{eq:KNq}
    \begin{aligned}
        \pi_k =R_k\left(\left(d\tau^{-1}_{h \Omega_k}\right)^* \Pi_k  - \frac{1}{h} \left( (mI+M_A)(s_k-n_k)\times s_k\right)\skewmap \right)R_k^{-1}.
    \end{aligned}
\end{equation}

Next, let us consider some well-known  group difference maps $\tau$.

{\bf Cayley transform as the group difference map.}
Let the map $\tau$ be the Cayley transform:
\begin{equation}
    \begin{aligned}
        \cay \left(\Omega\right) & = \left( I - \dfrac{\Omega}{2} \right)\inverse \left( I + \dfrac{\Omega}{2} \right), \quad \Omega\in \so(3),
    \end{aligned}
\end{equation}
and the dual of its right-trivialized tangent is given by \cite{gawlik2011geometric}
\begin{equation}
    \begin{aligned}
        \left( d \cay_{h\Omega_k}^{-1} \right)^\ast \Pi_k & = \left(I + \dfrac{h\Omega_k}{2} \right) \Pi_k \left( I - \dfrac{h\Omega_k}{2} \right)= \Pi_{k} + \dfrac{h}{2} \left[\Omega_k, \Pi_{k} \right] - \dfrac{h^2}{4} \Omega_k\Pi_{k}\Omega_k,\quad \Omega_k,\Pi_k\in\so (3) . \\
    \end{aligned}
\end{equation}
Then the discrete Euler--Poincar\'e equations \eqref{eq: discreteEPforUV_with_nu} become
\begin{equation}\label{eq: disEP_cayley_UV_by_nu}
    \begin{aligned}
        & \Pi_k + \frac{h}{2} \left[\Omega_k , \Pi_k \right] -\frac{h^2}{4}  \Omega_{k} \Pi_{k} \Omega_{k}  \\
        & \quad 
        = \Pi_{k-1} - \frac{h}{2} \left[\Omega_{k-1}, \Pi_{k-1} \right] 
        -\frac{h^2}{4}  \Omega_{k-1} \Pi_{k-1} \Omega_{k-1}   
        - h \rho |\mathcal{V}|\mathrm{g} \left( r \times a_k\right)\skewmap 
        + h \left( \left(mI + M_A\right) \nu_k\times \nu_k\right)\skewmap,  \\
        & \left(mI + M_A\right)\nu_k =  \cay\left(- h\Omega_{k-1}\right) \left(mI + M_A\right) \nu_{k-1} + h\left(m\mathrm{g} - \rho |\mathcal{V}| \mathrm{g} \right)  a_k .\\
    \end{aligned}
\end{equation}
To facilitate simulation, we will vectorize the first equation of \eqref{eq: disEP_cayley_UV_by_nu} using the following lemma.

\begin{lemma}
    For any $\omega \in \bbbrrr$, a $3\times 3$ real symmetric matrix $J$ and matrix $\hat{J} = \dfrac{1}{2} \tr \left(J\right) I - J$
    we have the identities 
\begin{equation}\label{eq: relation_omega_J_1}
        \left(J\omega\right)\skewmap = \omega\skewmap \hat{J} + \hat{J} \omega\skewmap ,
    \end{equation}
    \begin{equation}\label{eq: relation_omega_J_2}
        \inpro{J\omega, \omega} = \inpro{\hat{J} \omega\skewmap, \omega\skewmap},
    \end{equation}
    \begin{equation}\label{eq: relation_omega_J_3}
        \left(\omega \times J\omega\right)\skewmap = \left(\omega\skewmap\right)^2 \hat{J} - \hat{J} \left(\omega\skewmap\right)^2,
    \end{equation}
    \begin{equation}\label{eq: relation_omega_J_4}
        \omega\skewmap \left(J\omega\right)\skewmap \omega\skewmap = - \left( \norm{\omega}^2 J \omega + \omega\times \left(\omega \times J \omega \right) \right)\skewmap.
    \end{equation}
\end{lemma}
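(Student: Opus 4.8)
The four identities are not independent: \eqref{eq: relation_omega_J_1} is the cornerstone, \eqref{eq: relation_omega_J_2} and \eqref{eq: relation_omega_J_3} follow from it by algebra, and \eqref{eq: relation_omega_J_4} stands on its own. The plan is to establish \eqref{eq: relation_omega_J_1} first. Since $\hat{J} = \frac12\tr(J)I - J$, this identity is equivalent to $(J\omega)\skewmap = \tr(J)\,\omega\skewmap - \omega\skewmap J - J\omega\skewmap$, and both sides are linear in $\omega$ and in the symmetric matrix $J$. It therefore suffices to verify it on a spanning set, say $\omega \in \{e_1,e_2,e_3\}$ together with the six elementary symmetric matrices $e_ie_i\trans$ and $e_ie_j\trans + e_je_i\trans$, which is a finite entrywise check. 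A more conceptual route is to diagonalize $J = RDR\trans$ with $R\in\SO(3)$: using the standard equivariance $(Rv)\skewmap = R\,v\skewmap R\trans$ and $\hat{J} = R\hat{D}R\trans$ (immediate from $\tr(RDR\trans)=\tr D$), the statement for $(J,\omega)$ is the conjugate by $R$ of the statement for $(D, R\trans\omega)$, so one only needs the diagonal case. There the off-diagonal entries reduce to relations such as $\hat{J}_1 + \hat{J}_2 = J_3$, with the remaining entries following by cyclic permutation of the indices.

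Once \eqref{eq: relation_omega_J_1} is in hand, identity \eqref{eq: relation_omega_J_3} is immediate: I would write $(\omega\times J\omega)\skewmap = [\omega\skewmap, (J\omega)\skewmap] = \omega\skewmap(J\omega)\skewmap - (J\omega)\skewmap\omega\skewmap$ and substitute \eqref{eq: relation_omega_J_1}; the two middle terms $\omega\skewmap\hat{J}\omega\skewmap$ cancel, leaving exactly $(\omega\skewmap)^2\hat{J} - \hat{J}(\omega\skewmap)^2$. For \eqref{eq: relation_omega_J_2} I would pair \eqref{eq: relation_omega_J_1} with $\omega\skewmap$ in the trace inner product $\inpro{A,B} = \frac12\tr(A\trans B)$, using the isometry $\inpro{u\skewmap,v\skewmap} = \inpro{u,v}$, the symmetry of $\hat{J}$, cyclicity of the trace, and the identities $(\omega\skewmap)^2 = \omega\omega\trans - \norm{\omega}^2 I$ and $\tr\hat{J} = \frac12\tr J$; collecting terms yields the stated relation between $\inpro{J\omega,\omega}$ and $\inpro{\hat{J}\omega\skewmap,\omega\skewmap}$.

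Identity \eqref{eq: relation_omega_J_4} is independent of the previous three and uses only that $p := J\omega$ is a vector in $\bbbrrr$. I would apply the standard product rule $u\skewmap v\skewmap = v u\trans - \inpro{u,v} I$ to get $\omega\skewmap p\skewmap = p\,\omega\trans - \inpro{\omega,p} I$, then right-multiply by $\omega\skewmap$; since $\omega\trans\omega\skewmap = 0$, this collapses to $\omega\skewmap p\skewmap\omega\skewmap = -\inpro{\omega,p}\,\omega\skewmap$. Finally the vector triple product identity $\omega\times(\omega\times p) = \inpro{\omega,p}\,\omega - \norm{\omega}^2 p$ shows $\norm{\omega}^2 p + \omega\times(\omega\times p) = \inpro{\omega,p}\,\omega$, so both sides of \eqref{eq: relation_omega_J_4} equal $-\bigl(\inpro{\omega,p}\,\omega\bigr)\skewmap$. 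The main obstacle is really the base identity \eqref{eq: relation_omega_J_1}: once the reduction to the diagonal (or elementary-matrix) case is organized correctly, all the rest is bookkeeping, so the care should go into the equivariance step and keeping track of the constant $\tr\hat{J} = \frac12\tr J$.
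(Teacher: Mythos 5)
Your proposal is correct, but it does considerably more than the paper and, for the key identity \eqref{eq: relation_omega_J_4}, takes a genuinely different route. The paper does not prove \eqref{eq: relation_omega_J_1}--\eqref{eq: relation_omega_J_3} at all (it cites them from the underwater-vehicle reference) and only proves \eqref{eq: relation_omega_J_4}, by substituting \eqref{eq: relation_omega_J_1}, rewriting $\left(\omega\skewmap\right)^2\hat{J}\omega\skewmap+\omega\skewmap\hat{J}\left(\omega\skewmap\right)^2$ as $\left(\omega\skewmap\right)^3\hat{J}+\hat{J}\left(\omega\skewmap\right)^3$ minus the commutator $\left[\omega\skewmap,\left(\omega\skewmap\right)^2\hat{J}-\hat{J}\left(\omega\skewmap\right)^2\right]$, and then invoking $\left(\omega\skewmap\right)^3=-\norm{\omega}^2\omega\skewmap$ together with \eqref{eq: relation_omega_J_1} and \eqref{eq: relation_omega_J_3} once more. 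Your argument for \eqref{eq: relation_omega_J_4} instead uses the rank-one formula $u\skewmap v\skewmap=v u\trans-\inpro{u,v}I$ and $\omega\trans\omega\skewmap=0$ to collapse both sides to $-\inpro{\omega,J\omega}\,\omega\skewmap$; this is shorter, bypasses $\hat{J}$ and identities \eqref{eq: relation_omega_J_1}, \eqref{eq: relation_omega_J_3} entirely, and in fact proves the stronger statement with $J\omega$ replaced by an arbitrary vector $p\in\bbbrrr$ (no symmetry of $J$ needed), whereas the paper's route stays consistently inside the $\hat{J}$-calculus it uses elsewhere. Your proofs of \eqref{eq: relation_omega_J_1} (reduction to diagonal $J$ via equivariance of the hat map, then $\hat{J}_1+\hat{J}_2=J_3$ and cyclic permutations) and of \eqref{eq: relation_omega_J_3} as the bracket $\left[\omega\skewmap,\left(J\omega\right)\skewmap\right]$ with the $\omega\skewmap\hat{J}\omega\skewmap$ terms cancelling are sound and fill in material the paper leaves to a citation. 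One small caution on \eqref{eq: relation_omega_J_2}: carrying out your pairing computation with the paper's convention $\inpro{A,B}=\frac12\tr\left(A\trans B\right)$ gives $\inpro{\hat{J}\omega\skewmap,\omega\skewmap}=\frac12\,\omega\trans J\omega$, i.e.\ the identity as printed holds only up to a factor of $2$ unless the right-hand pairing is understood as the full trace $\tr\left(\left(\hat{J}\omega\skewmap\right)\trans\omega\skewmap\right)$ (consistent with the kinetic-energy identity $\tr\left(\Omega\hat{J}\Omega\trans\right)=\omega\trans J\omega$ used in the Lagrangian); your derivation is exactly the computation that exposes this constant, so just state explicitly which pairing you are using.
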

\begin{proof}
   The identities \eqref{eq: relation_omega_J_1}-\eqref{eq: relation_omega_J_3} can be found in \cite{amit}. Here, we  show proof of the   identity \eqref{eq: relation_omega_J_4}.

Using \eqref{eq: relation_omega_J_1} and \eqref{eq: relation_omega_J_3}, direct calculation gives
\begin{equation}
    \begin{aligned}
        \omega\skewmap \left(J\omega\right)\skewmap \omega\skewmap &=\omega\skewmap \left(  \omega\skewmap \hat{J} + \hat{J} \omega\skewmap \right) \omega\skewmap 
\\
&= \left(\omega\skewmap\right)^2 \hat{J} \omega\skewmap + \omega\skewmap \hat{J} \left(\omega\skewmap\right)^2\\
& = \left(\omega\skewmap\right)^3 \hat{J} + \hat{J} \left(\omega\skewmap\right)^3 - \left[\omega\skewmap, \left(\omega\skewmap\right)^2 \hat{J} - \hat{J} \left(\omega\skewmap\right)^2 \right]\\
    & =  - \norm{\omega}^2 \left( J \omega \right)\skewmap - \left[ \omega\skewmap,  \left(\omega \times J \omega \right)\skewmap  \right] \\
        & = - \left( \norm{\omega}^2 J \omega + \omega\times \left(\omega \times J \omega \right) \right)\skewmap.
    \end{aligned}
\end{equation}
Here, we have used the fact that 
\begin{equation}
    \left(\omega\skewmap\right)^3 = - \norm{\omega}^2 \omega\skewmap.
\end{equation}

\end{proof}

Based on the lemma above, recalling $\Pi_k = \hat{J} \Omega_k + \Omega_k \hat{J}$ and denoting $\Omega_k=\omega_k^{\skewmap}$, the first equation of \eqref{eq: disEP_cayley_UV_by_nu} can be vectorized as follows
\begin{equation}
    \begin{aligned}
        %
        & J \omega_{k} + \dfrac{h}{2} \left(\omega_{k} \times J \omega_{k}\right) + \dfrac{h^2}{4} \left( \norm{\omega_{k}}^2 J \omega_{k} +  \omega_{k} \times \left(\omega_{k} \times J \omega_{k}\right) \right) \\
        & ~~~~~~~ = J \omega_{k-1} - \dfrac{h}{2} \left(\omega_{k-1} \times J \omega_{k-1}\right) + \dfrac{h^2}{4} \left( \norm{\omega_{k-1}}^2 J \omega_{k-1} + \omega_{k-1} \times \left(\omega_{k-1} \times J \omega_{k-1}\right) \right) \\ 
        & \quad \quad \quad \quad 
        - h \rho |\mathcal{V}|\mathrm{g} \left(r \times a_k \right)
        + h \left(mI + M_A\right) \nu_k \times \nu_k , \\
    \end{aligned}
\end{equation}
which is implicit in $\mathbb{R}^3$.
Therefore, we use the Gauss--Newton method for solving  $f: \bbbrrr \to \bbbrrr$,
\begin{equation}
    f(\omega) = J \omega + \dfrac{h}{2} \left(\omega \times J \omega\right) + \dfrac{h^2}{4} \left( \norm{\omega}^2 J \omega+  \omega \times \left(\omega \times J \omega\right) \right).
\end{equation}
Its differential with respect to $\omega$ is
\begin{equation}
    \begin{aligned}
        \frac{\partial}{\partial \omega} f(\omega) = 
        J + \frac{h}{2} \left( \omega \skewmap J - \left(J \omega\right)\skewmap \right)
        + \frac{h^2}{4} \left( 2\omega (J \omega)\trans + \norm{\omega}^2 J - (\omega \times J \omega)\skewmap + \omega \skewmap \left(\omega \skewmap J - \left(J \omega \right)\skewmap\right) \right).
    \end{aligned}
\end{equation}

{\bf Matrix exponential as the group difference map.}
The matrix exponential can also be regarded as the group difference map $\tau$.
On the Lie group  $\SO (3)$, 
the matrix exponential can be written by Rodrigues' formula as the function from $\so(3)$ to $\SO(3)$:
\begin{equation} \label{eq: Rodrigues_formula}
        \exp \left(\omega\skewmap\right) = I + \dfrac{\sin \norm{\omega}}{ \norm{\omega}} \omega\skewmap  + \dfrac{1 - \cos \norm{\omega}}{\norm{\omega}^2} \left(\omega\skewmap \right)^2. \\
\end{equation}
for any $\omega \in \bbbrrr$ and $\omega\skewmap  \in \so(3).$
Its inverse right-trivialized tangent $d\exp\inverse: \so(3) \times \so(3) \to \so(3)$ can be derived as
\begin{equation}
    \begin{aligned}
        d \exp\inverse_{\Omega}\left(\Pi\right) 
        & = \Pi - \frac{1}{2} \left[ \Omega , \Pi \right]
        + \alpha \left[\Omega , \left[\Omega, \Pi \right]\right], \\
    \end{aligned}
\end{equation}
where 
\begin{equation}\label{eq: alpha}
   \alpha = \dfrac{1}{\norm{\omega}^2} \left(1 - \dfrac{\norm{\omega}}{2} \cot \dfrac{\norm{\omega}}{2} \right), \text{ and } \omega\skewmap= \Omega.
\end{equation}
Therefore, we can obtain 
\begin{equation}
    \left( d \exp_{h\Omega_k}^{-1} \right)^\ast \Pi_k = \Pi_k + \frac{h}{2} \left[ \Omega_k , \Pi_k \right] + h^2\alpha_k  \left[\Omega_k , \left[\Omega_k, \Pi_k \right]\right] , \quad \Omega_k,\Pi_k\in\so(3),
\end{equation}
where 
\begin{equation}\label{eq: alpha_k}
    \alpha_k = \dfrac{1}{\norm{\omega_{k}}^2} \left(1 - \frac{\norm{\omega_{k}}}{2} \cot \frac{\norm{\omega_{k}}}{2} \right), \text{ and } \omega_k^{\skewmap}=\Omega_k.
\end{equation}

Consequently, the discrete Euler--Poincar\'e equations \eqref{eq: discreteEPforUV_with_nu} become
\begin{equation}\label{eq: disEP_exp_UV}
    \begin{aligned}
         \Pi_k  + \frac{h}{2} \left[\Omega_k , \Pi_k \right]  &+ h^2\alpha_k \left[\Omega_k , \left[\Omega_k , \Pi_k \right] \right] \\
         &= \Pi_{k-1} - \frac{h}{2} \left[\Omega_{k-1}, \Pi_{k-1} \right] + h^2\alpha_{k-1} \left[\Omega_{k-1}, \left[\Omega_{k-1}, \Pi_{k-1} \right] \right] \\
         & \quad 
         - h \rho |\mathcal{V}|\mathrm{g} \left( r \times a_k\right)\skewmap 
        + h \left( \left(mI + M_A\right)\nu_k \times \nu_k \right)\skewmap, \\
         \left(mI + M_A\right)\nu_k
       & =  \exp\left(- h\Omega_{k-1}\right) \left(mI + M_A\right) \nu_{k-1}
        + h\left(m\mathrm{g} - \rho |\mathcal{V}| \mathrm{g} \right)  a_k .\\
    \end{aligned}
\end{equation}
Similar to the Cayley group difference map case, the first equation of \eqref{eq: disEP_exp_UV} can be vectorized  as 
\begin{equation}
    \begin{aligned}
         J \omega_{k} &+ \dfrac{h}{2} \left(\omega_{k} \times J \omega_{k}\right) + h^2 \alpha_k \left(\omega_{k} \times \left(\omega_{k} \times J \omega_{k}\right)\right) \\
        & \quad \quad  = J \omega_{k-1} - \dfrac{h}{2} \left(\omega_{k-1} \times J \omega_{k-1}\right) + h^2 \alpha_{k-1} \left(\omega_{k-1} \times \left(\omega_{k-1} \times J \omega_{k-1}\right) \right)\\
        & \quad \quad \quad 
        - h \rho |\mathcal{V}|\mathrm{g} \left(r \times a_k\right) 
        + h \left(mI + M_A\right) \nu_k \times \nu_k.
    \end{aligned}
\end{equation}
Again, we use the Gauss--Newton method to implement this implicit scheme by defining a function $f: \bbbrrr \to \bbbrrr$,
\begin{equation}
    f(\omega) = J \omega + \dfrac{h}{2} \left(\omega \times J \omega\right) + h^2 \alpha \left( \omega \times \left(\omega \times J \omega\right)\right),
\end{equation}
where $\alpha$ is given by \eqref{eq: alpha}.
The differential of $f$ with respect to $\omega$ is given by 
\begin{equation}
    \begin{aligned}
        \frac{\partial}{\partial \omega} f(\omega)  = 
        J  & + \frac{h}{2} \left( \omega \skewmap J - \left(J \omega\right)\skewmap \right)\\
        &  
        + h^2 \alpha \left( - (\omega \times J \omega)\skewmap + \omega \skewmap \left(\omega \skewmap J - \left(J \omega \right)\skewmap\right) \right)
        + h^2 \frac{\partial \alpha}{\partial \omega} \left(\omega \times \left(\omega \times J \omega\right)\right)\trans ,
    \end{aligned}
\end{equation}
where
\begin{equation}
    \frac{\partial \alpha}{\partial \omega} = \left( \frac{\norm{\omega}^2}{4} \csc^2 \frac{\norm{\omega}}{2} - 2 + \frac{\norm{\omega}}{2} \cot \frac{\norm{\omega}}{2} \right)\frac{\omega}{\norm{\omega}^4} .
\end{equation}


\section{Numerical simulations} \label{sec: NumericalSimulation}
In this section, 
we perform numerical simulations for underwater vehicle dynamics (see Section \ref{sec: Application2UV}) by using the discrete Euler--Poincar\'e equations. 
In particular, we show the behaviors of the total energy and the discrete Kelvin--Noether quantity. 

The group difference map is chosen as both the Cayley transform and the matrix exponential for the Lie group $\SO(3)$.
The physical quantities used in the simulations are summarized in Table \ref{tab: physical_parameters}, 
where the operator $\diag(x) \in M(n,\mathbb{R})$ returns a diagonal matrix with the given arguments $x \in \mathbb{R}^n$ on its main diagonal components.
Other parameters for numerical simulations are listed in Table \ref{tab: numerical_parameters}.
The initial values of the rotation matrix and the angular velocity are derived from the Euler angles (ZXZ convention), $\left( \psi_0, \theta_0, \phi_0\right)$ and $\left( \dot\psi_0, \dot\theta_0, \dot\phi_0\right)$, respectively.

\begin{table}[htbp]
    \centering
    \caption{ Physical quantities of the underwater vehicle dynamics and their meanings/values (adapted from that used in \cite{amit})}
      \vspace{0.2cm}
    \label{tab: physical_parameters}
    \begin{tabular}{c|l|l}
        \hline
        \textbf{Parameter} & \textbf{Meaning} & \textbf{Value in the simulation} \\ \hline
        $m$                                           & Scalar mass of the vehicle  & 123.8\ kg \\ \hline
        $M_A$                                         & Added mass matrix               & $\diag \left(65, 70,75 \right)$ kg \\ \hline
        $J$                                           & Standard inertia matrix   & $\diag\left(5.46, 5.29, 5.72\right)\ \rm{kg}\cdot\rm{m}^2$\\ \hline
        $\rho \lvert \mathcal{V} \rvert \mathrm{g} $  & Weight of the water displaced by the vehicle & 1215.8 kg\\ \hline
        $\mathrm{g} $                                 & Gravitational constant & 9.81 $\rm{m}/\rm{s}^2$ \\ \hline
        $r$     & Vector from  center of gravity to  center of buoyancy  & $(0,0, -0.007)\trans$ m\\ \hline
    \end{tabular}
\end{table}

\begin{table}[htbp]
    \centering
    \caption{Parameters used in  numerical simulations}
    \vspace{0.2cm}
    \label{tab: numerical_parameters}
    \begin{tabular}{c|l|l}
        \hline
       \textbf{Parameter}
        & \textbf{Meaning} & \textbf{Value in the simulation } \\ \hline
        $h$                                           &  Step size  & 0.01 s\\ \hline
        $q_0$   & Initial value of  position       & $(0, 0, 1)\trans$ m\\ \hline
        $v_0$   & Initial value of  velocity       & $(0.1, 0.1, 0.8)\trans$ $\rm{m}/\rm{s}$ \\ \hline
        $\left( \psi_0,\theta_0,\phi_0 \right)$ & Initial value of  Euler angles (ZXZ convention) & $(2\pi, 0,2\pi)$ rad \\ \hline
        $\left( \dot{\psi}_0, \dot{\theta}_0, \dot{\phi}_0\right)$ & Initial value of Euler-angle rates  & $(10\pi/180,10\pi/180, 10\pi/180)$ rad/s \\ \hline
    \end{tabular}
\end{table}

In the current setting, the underwater vehicle is influenced only by gravity and buoyancy forces and moments, with no other external forces acting on it, and hence the total energy is conserved. 
Figure \ref{fig: relative_ene_error_50_300} shows the relative error of energy over the time span of $[0,500]$ in a semi-logarithmic plot. 
The energy fluctuates within a certain range but tends to increase slightly over time. This is because of the Gauss--Newton method, which brings truncation errors.

The time evolution of the Kelvin--Noether quantity, namely $e_z$-component of the vector associated to $\pi_k$ given by \eqref{eq:KNq}  using the Lie algebra isomorphism \eqref{eq: skewmap}, is shown in Figure \ref{fig: KNquantity_30_500}  as a semi-logarithmic plot. 
Differences can hardly be observed in the two cases. Note that the Cayley transform is a second-order approximation of the matrix exponential \cite{bou2007hamilton}.

Given the current initial data and parameter settings, it is expected that the vehicle will initially rise due to the (positive) initial velocities and then descend under the combined effects of gravitational and buoyancy forces. Figure \ref{fig: Reduced_pos_1000} confirms that the vehicle's trajectories align with these expectations.

\begin{figure}[htbp]
    \centering
    \includegraphics[width=\textwidth]{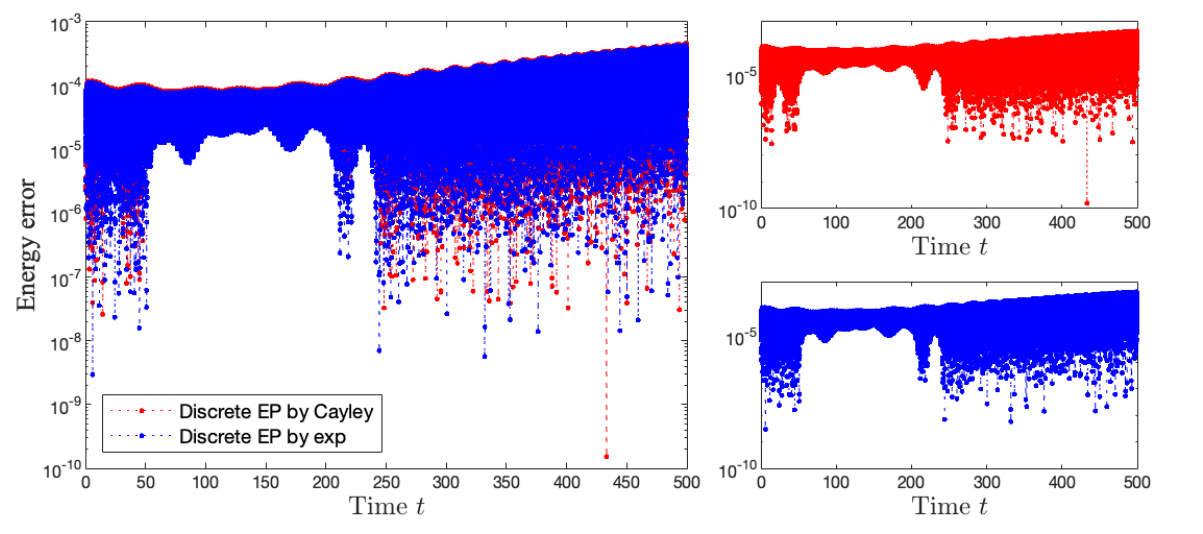}
    \caption{Relative error of the total energy, $\abs{\dfrac{E_k - E_0}{E_0}}$, over time span $[0, 500]$.}
    \label{fig: relative_ene_error_50_300}
\end{figure}

\begin{figure}[htbp]
    \centering
    \includegraphics[width=\textwidth]{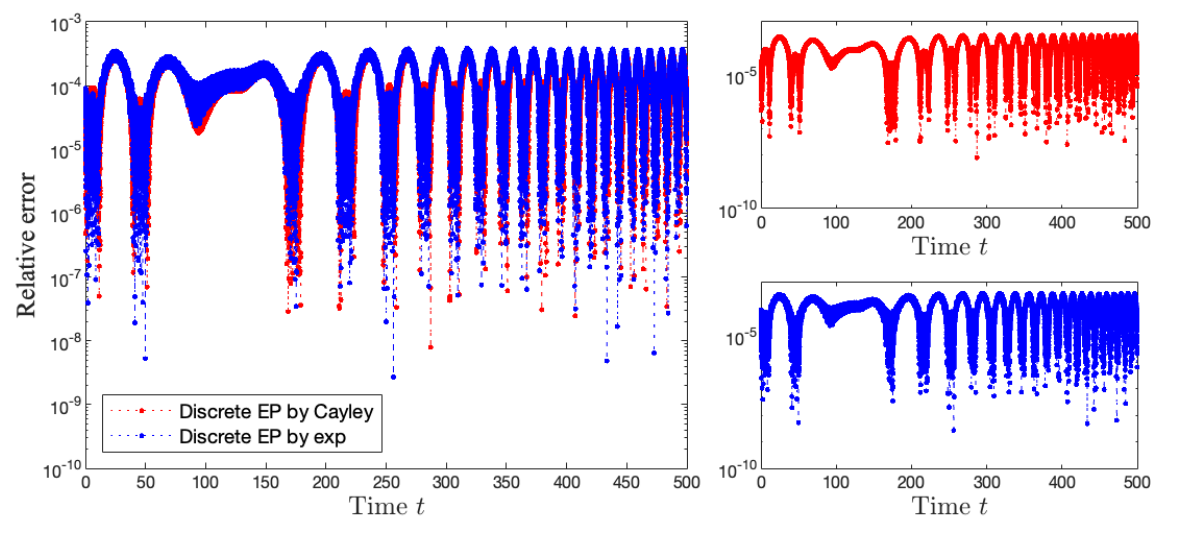}
    \caption{Relative error of the Kelvin--Noether quantity, $\abs{\dfrac{\mathcal{I}_k - \mathcal{I}_0}{\mathcal{I}_0}}$, over time span $[0, 500]$.}
    \label{fig: KNquantity_30_500}
\end{figure}

\begin{figure}[htbp]
    \centering
    \includegraphics[width=\textwidth]{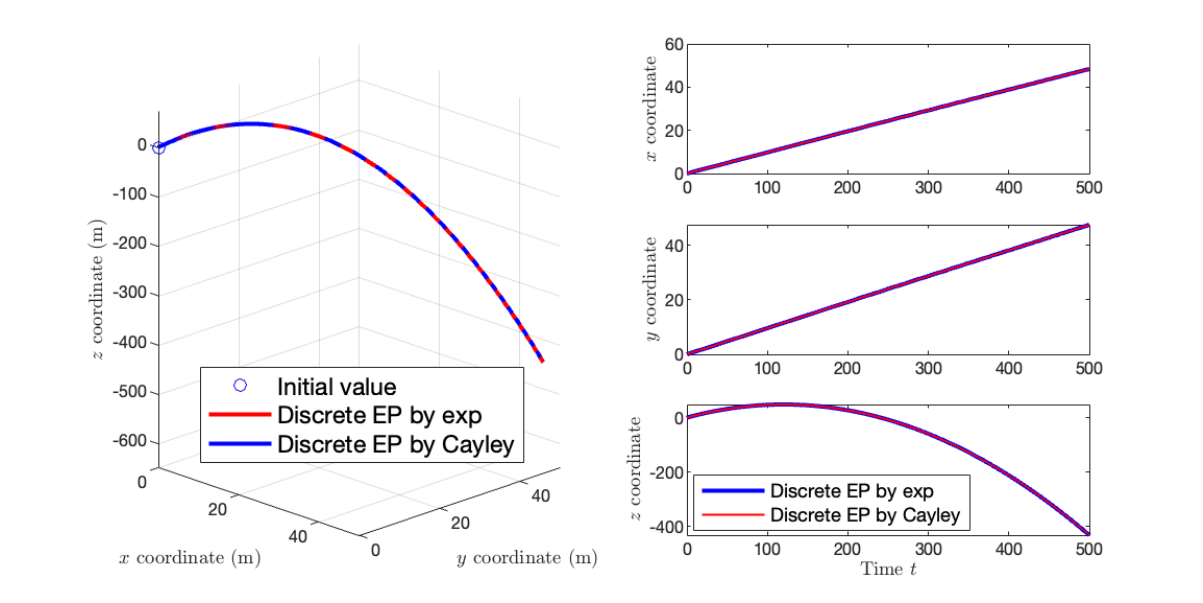}
    \caption{Left: 3-dimensional visualization of the vehicle's trajectory  over time span $[0, 500]$. 
    Right: The trajectory of the vehicle along each axis of the reference frame over time span $[0, 500]$.}
    \label{fig: Reduced_pos_1000}
\end{figure}



\section{Conclusion} \label{sec: Conclusion}
In this paper, we proposed the discrete Euler--Poincar\'e reduction for discrete Lagrangian systems on Lie groups with advected parameters and additional dynamics. The group difference map technique is employed to ensure that the trajectories remain in the configuration space, with the group difference map chosen as either the Cayley transform or the matrix exponential.
Furthermore, we extended the Kelvin--Noether theorem to these systems in both the continuous and the discrete settings.
These methods were applied to study the dynamics of underwater vehicles. We derived both the continuous and discrete Euler--Poincar\'e equations and performed numerical simulations based on the latter. The behavior of the conserved total energy and Kelvin--Noether quantity effectively illustrated the structure-preserving property of the scheme.

In future research, there remains significant scope for further exploration from both theoretical and applied perspectives.
From a theoretical standpoint, it is important to extend the discrete Euler--Poincar\'e reduction for systems on finite-dimensional Lie groups with advected parameters and additional dynamics, as well as the corresponding Kelvin--Noether theorem,  to infinite-dimensional Lie groups. This extension could lead to a more general framework that better captures the complexities of systems with infinite degrees of freedom, such as those arising in fluid dynamics or large-scale mechanical systems. 

Furthermore, as illustrated in simulations of the underwater vehicle dynamics, the fluctuation of total energy increases due to the use of the Gauss--Newton method for the implicit scheme. This issue can be improved by using the moving frame method, which preserves total energy by maintaining the time-translational symmetry of the underlying variational problems, even in variational integration (e.g., \cite{fels1999moving,Mansfield_2010, mv_linyu_2019}). It is important to note that in these numerical schemes, the time step often varies.


From an applied standpoint, fluid velocity and external forces in underwater vehicle dynamics should  be further taken into account for real-world applications, such as control and path planning.
The dynamics of underwater vehicles with fluid velocity has been modeled using the Lagrangian formulation (e.g., \cite{ardakani2019variational, fiori2024}). However, the Euler--Poincaré reduction with additional dynamics has yet to be fully developed for this context. It is expected that fluid velocity can be included in the Lagrangian as an additional advected parameter.
If a second advected parameter is introduced, the Kelvin--Noether theorem will be influenced by both advected parameters, and the associated Kelvin--Noether quantities may not be conserved in certain cases. It would also be interesting to generalize geometric optimal control strategies developed in, for example, \cite{hussein2006discrete,5872067,lee2008optimal}, to invariant systems on Lie groups with advected parameters and additional dynamics using the Euler--Poincar\'e reduction. This would facilitate the extension of optimal control methods to underwater vehicle dynamics and other similar mechanical systems.

\subsection*{Acknowledgments}
The authors are grateful to Darryl Holm, Ruiao Hu and Hiroaki Yoshimura for their inspiring discussions.
YO is partially supported by JST SPRING (JPMJSP2123) and the Keio University Doctorate Student Grant-in-Aid Program from Ushioda Memorial Fund. 
LP is partially supported by JSPS KAKENHI (24K06852), JST CREST (JPMJCR1914, JPMJCR24Q5), and Keio University (Academic Development Fund, Fukuzawa Fund). 

\bibliography{citations} 
\bibliographystyle{abbrv}

\end{document}